\newtheorem{lemma}{Lemma}
\newtheorem{remark}{Remark}
\newenvironment{proof}{{\hspace{0.3cm}\it Proof:}\hspace{0.15cm}}{\par}
\def\mathbi#1{\textbf{\em #1}}
\begin{document}

\title{Spatial Coded Modulation}

\author{Junshan~Luo,~Fanggang~Wang,~and~Shilian~Wang
%        and~Yong~Liang~Guan~\IEEEmembership{Member,~IEEE}% <-this % stops a space
\thanks{J. Luo and S. Wang are with the College of Electronic Science, National University of Defense Technology, Changsha 410073,
China (e-mail: ljsnudt@foxmail.com, wangsl@nudt.edu.cn)}% <-this % stops a space
\thanks{F. Wang is with the State Key Laboratory of Rail Traffic Control and Safety, Beijing Jiaotong University, Beijing 100044, China (e-mail: wangfg@bjtu.edu.cn)}
%\thanks{Y. L. Guan is with the School of Electrical and Electronic Engineering, Nanyang Technological University, Singapore (e-mail: eylguan@ntu.edu.sg)}% <-this % stops a space
}

\maketitle

% As a general rule, do not put math, special symbols or citations
% in the abstract or keywords.
\begin{abstract}
  In spatial modulation systems, information bits are retrieved by detecting the active antennas and the modulated symbols. The reliability of the active antenna detection is of vital importance since the modulated symbols tend to be correctly demodulated when the active antennas are accurately identified. In this paper, we propose a \textit{spatial coded modulation} (SCM) scheme, which improves the accuracy of the active antenna detection by coding over the transmit antennas. Specifically, the antenna activation pattern in the SCM corresponds to a codeword in a properly designed codebook with a larger minimum Hamming distance than its counterpart conventional spatial modulation. As the minimum Hamming distance increases, the reliability of the active antenna detection is directly enhanced, which in turn improves the demodulation of the modulated symbols and yields a better system reliability. In addition to the reliability, the proposed SCM scheme also achieves a higher capacity with the identical antenna configuration compared to the conventional spatial modulation technique. Moreover, the proposed SCM scheme strikes a balance between spectral efficiency and reliability by trading off the minimum Hamming distance with the number of available codewords. The optimal maximum likelihood detector is first formulated. Then, a low-complexity suboptimal detector is proposed to reduce the computational complexity, which has a two-step detection. Theoretical derivations of the channel capacity and the bit error rate are presented in various channel scenarios, i.e., Rayleigh, Rician, Nakagami-$m$, imperfect channel state information, and spatial correlation. Further derivation on performance bounding is also provided to reveal the insight of the benefit of increasing the minimum Hamming distance. Numerical results validate the analysis and demonstrate that the proposed SCM outperforms the conventional spatial modulation techniques in both channel capacity and system reliability.
\end{abstract}

% Note that keywords are not normally used for peerreview papers.
\begin{IEEEkeywords}
Antenna correlation, channel capacity, Hamming distance, reliability, spatial modulation.
\end{IEEEkeywords}

% For peer review papers, you can put extra information on the cover
% page as needed:
% \ifCLASSOPTIONpeerreview
% \begin{center} \bfseries EDICS Category: 3-BBND \end{center}
% \fi
%
% For peerreview papers, this IEEEtran command inserts a page break and
% creates the second title. It will be ignored for other modes.
%\IEEEpeerreviewmaketitle

\section{Introduction}
Higher spectral efficiency and better reliability are key features of the future wireless communication systems \cite{Andrews_2014}. In response to these ever-growing demands, multi-antenna techniques have been intensively investigated. In particular, communication systems employing multiple antennas at both ends can achieve better performance in terms of both reliability and efficiency.

In the existing multi-antenna schemes, the multiple-input multiple-output (MIMO) multiplexing techniques significantly improve the system capacity. However, one of the main drawbacks associated with the conventional MIMO multiplexing architecture is the requirement of multiple radio frequency (RF) chains, which leads to high power consumption and high hardware cost \cite{Renzo_M2014}. Therefore, the spatial modulation (SM) was proposed as a compelling single-RF multi-antenna scheme, which conveys information bits via both the conventional modulated symbols and the indices of the transmit antennas \cite{Mesleh_R2008}. This single-RF design offers attractive benefits over the conventional MIMO schemes in terms of  energy efficiency, hardware cost, complexity, and reliability \cite{Mesleh_R2017_1,Yang_P2015}. However, in order to utilize all the transmit antennas efficiently, the bit-to-antenna mapping requires the number of transmit antennas to be a power of two, which imposes a major limitation on the system design.  In addition, since only a single antenna transmits the signal, the SM scheme is spectrum-inefficient compared to the multiplexing techniques.

%For instance, by activating a single transmit antenna in each transmission, the spatial modulation scheme can mitigate the inter-channel interference and relaxes the requirement of inter-antenna synchronization \cite{Jeganathan_J2008}. Activating a single transmit antenna also relaxes the requirement of inter-antenna synchronization, which suits the high-velocity vehicular systems \cite{Yang_P2018_1}. In addition, the single RF chain design is capable to reduce the hardware implementation cost, especially for the large-scale MIMO systems. Moreover, the spatial modulation technique provides higher spectral efficiency than the orthogonal space time block coding, since additional information bits are conveyed in the spatial domain.%and has become a promising alternative in the diverse family of MIMO techniques.

%Specifically, GSM offers three main advantages. First, it reduces the signal processing complexity and the hardware cost due to the fact that it needs a single RF chain at the transmitter . Second, it improves the system energy efficiency, since a single power amplifier is needed \cite{Stavridis_A2013}. Third, it avoids the inter-channel interference at the receiver, which significantly reduces the complexity of the maximum likelihood (ML) demodulator \cite{Rajashekar_R2014}.

Encompassing the SM technique as a special case, the generalized spatial modulation (GSM) was proposed to break through the limitation on the number of transmit antennas and improve the spectral efficiency in the spatial domain, by activating a group of transmit antennas in each transmission and exploiting the indices of the active antenna group to deliver additional information bits \cite{Younis_A2010}. It has been proved that the GSM scheme has the advantages of the conventional SM techniques, such as the single-RF architecture \cite{Mesleh_R2017}, which has attracted intense attention in the recent years. Previous works on GSM have been mainly concerned with optimizing the error performance \cite{Cheng_P2018,Rajashekar_R2018,Xiao_L2018}, improving the system capacity \cite{Jin_S2015,Perovic_NS2018}, designing low-complexity detectors \cite{Xiao_Y2014,Liu_W2014,Rodriguez_2015,Xiao_L2017}, and implementing the practical applications \cite{Narasimhan_2015,Xiao_M2017,Liu_P2016,He_L2017,Kumar_RC2018}.

%Specifically, the GSM system error performance was enhanced by preprocessing the transmitted signal to maximize the minimum Euclidean distance between the received signals \cite{Cheng_P2018}. In \cite{Cao_Y2018}, the precoding technique was utilized to improve the GSM system reliability in the spatially correlated channels. By adaptively selecting the transmit antenna subset, the diversity gain of the GSM can be improved \cite{Rajashekar_R2018,Xiao_L2018}. In addition to the reliability, the capacity is also an important performance metric for the communication systems. In \cite{Jin_S2015}, the precoder was carefully designed to optimize the mutual information for the GSM system. As an alternative to the channel capacity, the channel cut-off rate for the GSM system was improved by several precoding methods \cite{Perovic_NS2018}. Moreover, plenty of effort was devoted to reducing the receiver computational complexity. A low-complexity near-maximum likelihood detector was proposed in \cite{Xiao_Y2014}, which was based on the ordered block minimum mean-squared error. The detection complexity was further reduced in \cite{Yu_M2012,Liu_W2014,Rodriguez_2015,Xiao_L2017} by exploiting the sparsity nature of the activated antennas. Due to its low cost and high energy efficiency, the GSM scheme was deemed a suitable technique for millimeter-wave communication \cite{Xiao_M2017,Liu_P2016,He_L2017}, large-scale MIMO systems \cite{Narasimhan_2015} and visible light communication \cite{Kumar_RC2018}.

Despite the previous work, in this paper, we aim to enhance the correct detection of the active antennas. This problem is important since, on one hand, the modulated symbols tend to be correctly demodulated when the active antennas are accurately identified. Otherwise, the unreliable detection of antenna activation patterns would deteriorate the demodulation of the modulated symbols, which results in the overall system reliability degradation. This is due to retrieving the transmitted modulated symbols under incorrect estimation of the antenna activation patterns corresponds to the event where the modulated symbols need to be detected with erroneous channel estimation. On the other hand, the identification of the antenna activation patterns are even more unreliable in sparsely scattered channels due to similar channel states from transmit antenna groups to the receive antennas.

%Specifically, the trellis coded modulation was combined with SM in \cite{Mesleh_R2010} to improve the error performance over the spatially correlated channels. In \cite{Basar_E2011}, another trellis coded SM scheme was proposed, which achieves better error performance than that in \cite{Mesleh_R2010} in both independent and correlated channels. Moreover, the transmit diversity of the SM can be improved by space-time block coding \cite{Basar_E2011_1,Wang_L2018,Basar_E2012}. A nonbinary low-density parity check coded SM scheme was proposed in \cite{Feng_D2018}, where the source information bits are encoded before mapping onto the transmit antennas and the modulated symbols.

Indeed, channel coding has proved to be an effective way to enhance the system reliability and coded spatial modulation schemes have been extensively investigated in \cite{Mesleh_R2010,Basar_E2011,Basar_E2011_1,Basar_E2012,Feng_D2018,Wang_L2018}. However, although these schemes are able to enhance the system reliability, they adopt the coding scheme in a conventional way which is to encode the information bits directly. The reliability is improved at the cost of the redundancy. In contrast, the proposed SCM scheme adopts coding over the transmit antennas without sacrificing any communication efficiency.

%\footnote{An article \cite{Awaji_Y2015} with the same name was found in the literature of optical research when we name our proposed scheme. Since only the abstract in \cite{Awaji_Y2015} is accessible and there is no follow-up work, comprehensive comparison is unavailable. By carefully reading the abstract of \cite{Awaji_Y2015}, we claim that our scheme has different motivation, intention, methodology, and application scenario. Therefore, we stick to use this name to exhibit the advantage of the proposed scheme and be consistent with the convention of spatial modulation as well.}

In this paper, we propose the strategy of coding-over-antenna to directly enhance the antenna detection, which is called \textit{spatial coded modulation} (SCM). In particular, coding-over-antenna implies that the antenna activation patterns are determined by the binary error control codes, in which ``$1$'' denotes the antenna to be active and ``$0$'' represents the antenna to be idle. In the proposed SCM, the minimum Hamming distance is enlarged and the accuracy of detecting the antenna activation patterns can be enhanced. In contrast to the conventional SM and GSM schemes where a fixed number of transmit antennas is used in each transmission, the proposed SCM allows a varying number of the active transmit antennas, which is determined by the Hamming weight of each codeword. Note that the idea of varying the number of active transmit antennas was independently proposed in \cite{Osman_2015,Liu_P2017}, and in the literature of the space-shift keying (SSK) modulation, which encodes information bits onto the antenna index only. However, this SSK scheme is designed for achieving better balance between reliability and power consumption \cite{Chang_RY2012}. In contrast, the proposed SCM has a different motivation that we use coding-over-antenna to improve both the antenna detection and the sequel modulated symbol detection. Beyond \cite{Chang_RY2012}, we provide both theoretical analysis and numerical results, which show that the proposed SCM scheme also improves the system capacity.

Overall, the proposed SCM scheme mainly benefits from four aspects, i.e., reliability, capacity, flexibility and compatibility: First, it directly encodes over the transmit antennas and thus improves the accuracy of the active antenna detection, which also enhances the baseband symbol demodulation; Second, it achieves a higher capacity than the conventional SM and GSM schemes with an identical antenna configuration; Third, it is capable of achieving a flexible tradeoff between the spectral efficiency and system reliability. In particular, the minimum Hamming distance of the codebook and the achieved spectral efficiency are a tradeoff under a certain required system performance; Fourth, it is compatible with the existing SM schemes. Thus, it can be readily incorporated with these existing SM schemes to further enhance the reliability performance. Moreover, although a varying number of antennas are activated, the proposed SCM scheme could be implemented with a single RF chain since a single signaling stream is transmitted, which was verified in \cite{Mesleh_R2017}. This single-RF architecture is advantageous from the perspective of both energy efficiency and implementation cost. The contribution of this paper is summarized as follows:
\begin{itemize}
  \item A transmission scheme referred to as the spatial coded modulation is proposed to improve the reliability and the capacity without sacrificing the communication efficiency. In contrast to the conventional spatial modulation schemes, the main feature of the proposed SCM is coding-over-antenna, i.e., the antenna activation patterns are determined by the error control codes. Through coding-over-antenna, the detection probability of the active antennas could be directly enhanced and the baseband symbol demodulation is improved as well. In addition, the system capacity can also be enhanced compared to the conventional SM and GSM schemes with an identical antenna configuration.
  \item The optimal maximum likelihood (ML) detector is first formulated and then a low-complexity suboptimal detector is proposed. The suboptimal detector first determines a set of the possible indices of antenna activation patterns and then performs the ML detection to jointly estimate the antenna indices and the modulated symbols. Accordingly, a flexible tradeoff between the computational complexity and the error performance is achieved by altering the cardinality of the antenna candidate set.
  \item The capacity and the bit error rate (BER) are derived in various channel scenarios including Rayleigh, Rician, Nakagami-$m$, spatial correlation, and  imperfect channel estimation. The theoretical analysis shows that the proposed SCM scheme enables a larger minimum Hamming distance of the codebook, and thus achieves better performance in both the capacity and the reliability, which is further validated through numerical results. The gain can be further improved when the number of antennas increases. This observation exhibits the proposed SCM to be promising especially in the context of large-scale MIMO.
  \item The capacity and the reliability of the proposed SCM are numerically evaluated and compared with the conventional SM and GSM schemes in different fading channels. The results show that the improvement in reliability and capacity are more significant in the sparsely scattered channels and the highly spatially correlated channels, in which it is more demanding to detect the active antennas.
\end{itemize}

The remainder of this paper is organized as follows. The general transceiver setting of the SM schemes is presented in Section II. In Section III, we present two toy examples to illustrate the benefits of the proposed SCM scheme. In Section IV, we introduce the general framework of the proposed SCM scheme and formulate both the optimal and the suboptimal detectors. Theoretical analysis of the capacity and the reliability is provided in Section V, and the numerical results are presented in Section VI. Finally, this paper is concluded in Section VII.

\textit{Notation:} Throughout this paper, variables, vectors, and matrices are written as italic letters $x$, bold italic letters $\boldsymbol{x}$, and bold capital italic letters $\boldsymbol{X}$, respectively. Random variables, random vectors and a random matrices are respectively denoted by $\mathsf{x}$, $\boldsymbol{\mathsf{x}}$, and $\boldsymbol{\mathsf{X}}$. $|\mathcal{X}|$ is the cardinality of set $\mathcal{X}$; $\mathbb{E}$ and $\mathbb{E}_{\mathsf{x}}$ denote the expectation with respect to all the randomness in the argument and the expectation with respect to $\mathsf{x}$, respectively; $\Re\{c\}$ represents the real part of the complex number $c$. The operators $\otimes$, $|\cdot|$, $\lfloor \cdot \rfloor$, $(  \cdot  )^\dagger $, and $\left\|  \cdot  \right\|$ denote the Kronecker product, the determinant, the floor, the Hermitian, and the $\ell_2$ norm of their arguments, respectively. The operator $\textrm{vec}(\cdot)$ is to stack each column of the matrix on top of the right adjacent column. $\boldsymbol{I}_k$ denotes the $k$-by-$k$ identity matrix, and its subscript can be omitted when there is no confusion. Define $\mathcal{I}_N = \left\{1, 2, \ldots, N \right\}$ as a shorthand as the index set. The default base of the logarithm is $2$ in this paper.

\section{System Model}
In this section, we preview the system model of a general spatial modulation scheme. Consider an $M \times N$ multi-antenna communication system, where $M$ and $N$ represent the numbers of the transmit antennas and the receive antennas, respectively. From the convention of spatial modulation, we assume that one modulated symbol is transmitted and the indices of the active transmit antennas are chosen once per channel use.\footnote{The assumption is for the ease of illustration in the following. This assumption can be generalized to the case of transmitting multiple modulated symbols at each active antenna pattern, which is trivial in the investigation of the SM schemes.} Hereafter, we formulate the problem as per channel use. The received signal $\boldsymbol{\mathsf{y}} \in \mathbb{C}^{N}$ can be formulated as
\begin{equation}\label{system_overall_model}
\boldsymbol{\mathsf{y}} = \boldsymbol{\mathsf{H}}\boldsymbol{\mathsf{z}}+\boldsymbol{\mathsf{u}}
\end{equation}
where $\boldsymbol{\mathsf{H}} \in \mathbb{C}^{N \times M}$ is the channel state information (CSI); $\boldsymbol{\mathsf{z}} \in \mathbb{C}^{M}$ is the transmit signal; $\boldsymbol{\mathsf{u}} \in \mathbb{C}^{N}$ is the noise vector with independent and identically distributed (i.i.d.) samples following the circularly symmetric complex Gaussian (CSCG) distribution, denoted by $\mathcal{CN}(0,\sigma^2)$, where $\sigma^2$ is the noise variance. For a general spatial modulation setting at the transmitter side, the transmitter activates $W$ $(1 \le W < M)$ of the $M$ transmit antennas per channel use to transmit an identical data stream. Note that, although multiple antennas are active, only a single RF chain is required at the transmitter side, which was claimed in \cite{Mesleh_R2017}. An amplitude and/or phase modulation (APM) symbol of the constellation diagram $\mathcal{S}$, $\mathsf{s} \in \mathcal{S}$, is sent from each active antenna with the normalized power, i.e., $\mathbb{E}\,|\mathsf{s}|^2=1$. Let $\boldsymbol{\mathsf{a}} = [\mathsf{a}_1, \ldots, \mathsf{a}_M]^\textrm{T}$ denote the antenna activation pattern, and then the transmit signal can be written as
\begin{equation}
\boldsymbol{\mathsf{z}} = \frac{1}{\sqrt{W}}\boldsymbol{\mathsf{a}}\mathsf{s}
\end{equation}
where $\mathsf{a}_\ell \in \{0,1\}$, $\ell \in \mathcal{I}_M$; $\mathsf{a}_\ell = 1$ represents that the $\ell$th transmit antenna is active; $\mathsf{a}_\ell = 0$ represents that the $\ell$th transmit antenna is idle. Assuming the CSI is perfectly known at the receiver, the optimal ML detection jointly determines the antenna activation pattern and the APM symbol by solving
\begin{equation}\label{ML_Detection}
(\hat{\boldsymbol{a}}, \hat s) = \arg \mathop {\min }\limits_{\boldsymbol{\mathsf{a}} \in \mathcal{A}, \mathsf{s} \in \mathcal{S}} {\Big\| {\boldsymbol{\mathsf{y}} - \frac{1}{\sqrt{W}}\boldsymbol{\mathsf{H}}\boldsymbol{\mathsf{a}}\mathsf{s}} \Big\|^2}
\end{equation}
where $\mathcal{A}$ represents all candidates of $\boldsymbol{\mathsf{a}}$. From \eqref{ML_Detection}, it can be observed that the error decision of the antenna activation patterns further deteriorates the demodulation of the APM symbols. This is due to the APM symbols are demodulated with incorrect CSI when the error decision of antenna activation patterns occurs. Moreover, in addition to the optimal ML detection, the suboptimal scheme detects the antennas and the APM symbols individually.
\begin{figure}[!t]
  \centering
  \includegraphics[width=3.5in]{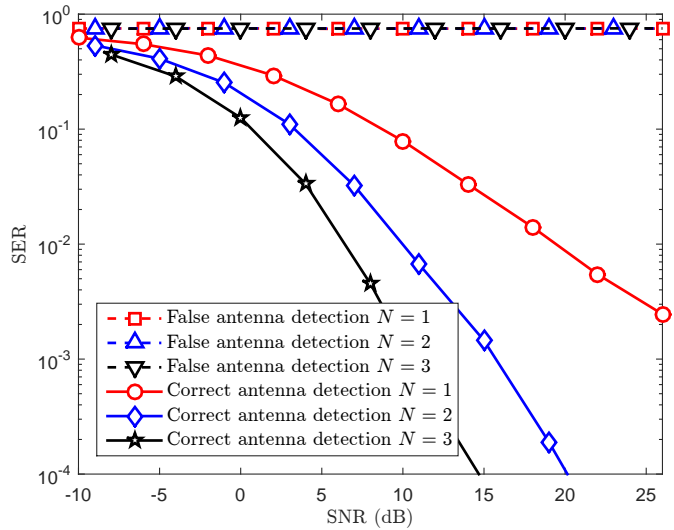}\\
  \caption{The SER of the QPSK symbols in the Rayleigh channel, where the $2 \times N$ SM system employs the ML detector. The results show that correct antenna index detection enhances the reliability of the APM symbols. In addition, the demodulation of the APM symbols cannot be improved with more receive antennas when the antenna index is erroneously detected.}
  \label{antenna_vs_APM}
\end{figure}
In this scheme, the maximum ratio combining or the matched filter is used, and the indices of the active antennas are detected first. Next, assuming that the antenna indices are correctly detected, the APM symbols are demodulated using the ML criterion. Therein, the APM symbol demodulation tends to be degraded by the error detection of the antenna indices. A quantitative illustration of this fact is shown in Fig. \ref{antenna_vs_APM}, which depicts the symbol error rate (SER) of the APM symbols in a $2 \times N$ SM scheme with both the correct and the false antenna detection. From Fig. \ref{antenna_vs_APM}, we observe that: First, correct antenna index detection significantly enhances the reliability of the APM symbols, i.e., the SER of the APM symbols with correct antenna detection outperforms its counterpart with false antenna detection in all signal-to-noise ratio (SNR) regime; Second, the demodulation of the APM symbols cannot be improved by employing more receive antennas since the incorrect CSI is used at the receiver.

\section{Toy Examples}
In this section, we use two toy examples to illustrate the principle of the proposed SCM. It is capable of correcting part of the antenna detection errors and thus outperforms the SM and the GSM schemes in reliability. Beyond that, the channel capacity is also improved. In the following, the proposed SCM is compared to the conventional SM and the GSM, respectively.
\subsection{SCM vs. SM}
We consider a $4 \times 2$ SM system with a spectral efficiency of $5$ bits/sec/Hz as the benchmark.\footnote{The reason that we choose $4$ transmit antennas is that the SM scheme generally requires the number of transmit antennas to be a power of $2$. In addition, when the transmitter has $2$ antennas, the proposed SCM reduces to the conventional SM scheme.} Herein, $2$ information bits are assigned to select one of the $4$ transmit antennas, and $3$ bits are modulated into a quadrature amplitude modulation (QAM) symbol. Specifically, the bit-to-antenna mapping of the conventional SM scheme is shown in Table I.
\begin{table}[!t]\label{SM_mapping}
\centering
\caption{Bit-to-antenna Mapping of the Conventional SM with Minimum Hamming Distance of $2$.}
\begin{tabular}{cccc}
\hline
Spatial information bits&
Index of activated Tx&
Codewords\\
\hline
$00$&
$\#1$&
$1000$\\
$01$&
$\#2$&
$0100$\\
$10$&
$\#3$&
$0010$\\
$11$&
$\#4$&
$0001$\\
\hline
\end{tabular}
\end{table}
\begin{table}[!t]\label{SCM_mapping_new}
\centering
\caption{Bit-to-antenna Mapping of the Proposed SCM with Minimum Hamming Distance of $3$.}
\begin{tabular}{cccc}
\hline
Spatial information bits&
Index of activated Tx&
Codewords\\
\hline
$0$&
$\#3,\#4$&
$0011$\\
$1$&
$\#1$&
$1000$\\
\hline
\end{tabular}
\end{table}
\begin{table}[!t]\label{SCM_mapping}
\centering
\caption{Bit-to-antenna Mapping of the Proposed SCM with Minimum Hamming Distance of $4$.}
\begin{tabular}{cccc}
\hline
Spatial information bits&
Index of activated Tx&
Codewords\\
\hline
$0$&
$\#4$&
$0001$\\
$1$&
$\#1,\#2,\#3$&
$1110$\\
\hline
\end{tabular}
\end{table}
As a fair comparison, the SCM scheme adopts the identical antenna configuration, i.e., the $4 \times 2$ multi-antenna setting, with a spectral efficiency of $5$ bits/sec/Hz as well. In contrast to the conventional SM, the antenna activation patterns of the SCM are determined by a $(4,1)$ linear block code. Specifically, we consider two $(4,1)$ codes, where the first codebook has a minimum Hamming distance of $4$ and the second codebook has a minimum Hamming distance of $3$.\footnote{Different Hamming distances are chosen to reveal the insight of the proposed scheme.} In the SCM, one bit is carried by one of the two antenna activation patterns and $4$ bits are modulated into a $16$ QAM symbol. The bit-to-antenna mapping for the proposed $4 \times 2$ SCM system is shown in Tables II and III, respectively. Comparing the codebooks in SCM and SM, we observe that the proposed SCM has a larger minimum Hamming distance than that of the conventional SM scheme. Next, the BER and the capacity\footnote{The evaluation of the capacity is provided in Section \ref{Performance_Analysis}.} of both the SCM and the SM schemes are evaluated in the Rician channel, which is shown in Fig. \ref{Compr_SM_ScodSM_Toy_Model}.\footnote{The reason that we adopt the Rician fading is that the improvement of the SCM is significant. In Rayleigh fading, the SCM still outperforms the SM with slightly smaller gain.} Numerical results demonstrate that the SCM outperforms the conventional SM in terms of both the reliability and the efficiency. Moreover, the gain is improved as the minimum Hamming distance increases.
\begin{figure}[!t]
  \centering
  \includegraphics[width=3.5in]{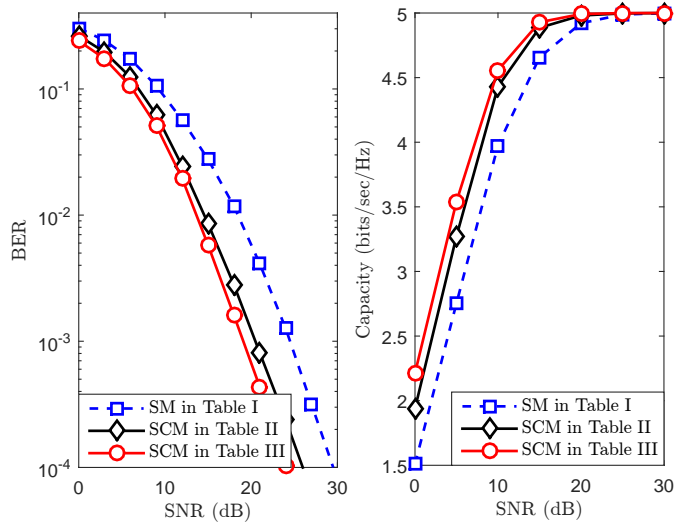}\\
  \caption{The BER and the capacity of the proposed SCM and the conventional SM in the Rician channel with a spectral efficiency of $5$ bits/sec/Hz. The transmitter has $4$ antennas and the receiver has $2$ antennas. The bit-to-antenna mapping tables are shown in Tables I, II and III, respectively. The results reveal that the SCM outperforms the conventional SM in both the reliability and the efficiency.}
  \label{Compr_SM_ScodSM_Toy_Model}
\end{figure}
\subsection{SCM vs. GSM}
Since $3$ transmit antennas are the minimum setup of the GSM, we consider a $3 \times 2$ GSM system having $2$ active antennas with a spectral efficiency of $3$ bits/sec/Hz as the benchmark. One information bit is to select one of the two antenna activation patterns and the other two bits are modulated into a quaternary phase-shift keying (QPSK) symbol. Specifically, the bit-to-antenna mapping of the GSM scheme is shown in Table IV. As a fair comparison, the SCM adopts the identical antenna configuration, i.e., the $3 \times 2$ multi-antenna setting, with a spectral efficiency of $3$ bits/sec/Hz as well. In contrast to the conventional GSM, the antenna activation patterns are determined by a $(3,1)$ code. In the SCM, one bit is mapped onto one of the two antenna activation patterns and the other two bits are modulated into a QPSK symbol. The bit-to-antenna mapping for the proposed $3 \times 2$ SCM scheme is shown in Table V. Comparing the two codebooks in Tables IV and V, we observe that the minimum Hamming distance is enlarged with the proposed SCM. Next, the BER and the capacity of the SCM and the GSM schemes are evaluated in the Rician fading channel, which is shown in Fig. \ref{Compr_GSM_ScodSM_Toy_Model}. Numerical results show that the SCM outperforms the conventional GSM in both the reliability and the efficiency.
\begin{table}[!t]
\centering
\caption{Bit-to-antenna Mapping of the Conventional GSM with Minimum Hamming Distance of 2.}
\begin{tabular}{cccc}
\hline
Spatial information bits&
Index of activated Tx&
Codewords\\
\hline
$0$&
$\#1,\#2$&
$110$\\
$1$&
$\#2,\#3$&
$011$\\
\hline
\end{tabular}
\end{table}
\begin{table}[!t]
\centering
\caption{Bit-to-antenna Mapping of the Proposed SCM with Minimum Hamming Distance of 3.}
\begin{tabular}{cccc}
\hline
Spatial information bits&
Index of activated Tx&
Codewords\\
\hline
$0$&
$\#3$&
$001$\\
$1$&
$\#1,\#2$&
$110$\\
\hline
\end{tabular}
\end{table}
\begin{figure}[!t]
  \centering
  \includegraphics[width=3.5in]{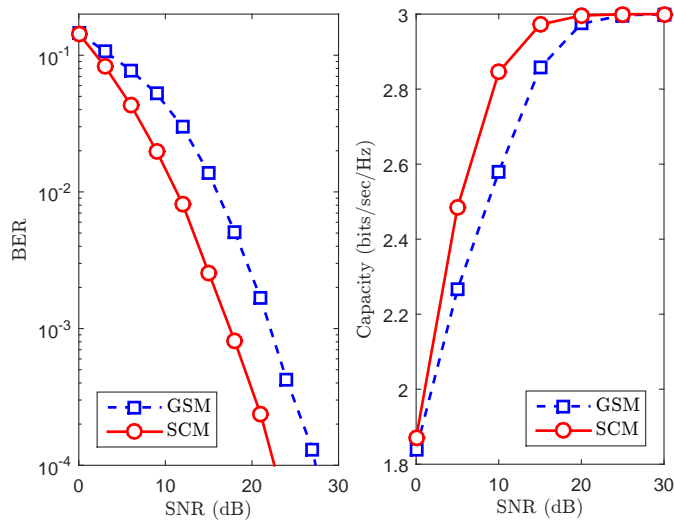}\\
  \caption{The BER and the capacity of the proposed SCM and the conventional GSM in the Rician channel with a spectral efficiency of $3$ bits/sec/Hz. The transmitter has $3$ antennas and the receiver has $2$ antennas. The bit-to-antenna mapping tables are shown in Tables IV and V, respectively. The results reveal that the SCM outperforms the conventional GSM in both the reliability and the efficiency.}
  \label{Compr_GSM_ScodSM_Toy_Model}
\end{figure}
%%%%%%%%%  This is the beginning of Remark 3   %%%%%%%%%%%%%%%%%%%%%%
\begin{remark}
The reliability and the efficiency of the proposed SCM can be further improved using more transmit antennas, since the minimum Hamming distance can be further enlarged. It is worth noting that, although multiple antennas are deployed and are activated simultaneously at the transmitter, the proposed SCM scheme can be implemented with only a single RF chain since each active antenna transmits the identical signal, which was claimed in \cite{Mesleh_R2017}.
\end{remark}
%%%%%%%%%  This is the end of Remark 3   %%%%%%%%%%%%%%%%%%%%%%
\section{Transceiver Design of SCM}
%\begin{figure}[!t]
%  \centering
%  \includegraphics[width=0.9\textwidth]{./figs/System_Model}\\
%  \caption{Block diagram of the proposed SCM scheme.}
%  \label{model}
%\end{figure}
%The general SCM system model is illustrated in Figure \ref{model}.
In this section, we introduce the design of the proposed SCM scheme, the optimal and the suboptimal detectors. The input data bits are split into two streams, i.e., the signaling data stream and the spatial data stream. The bits of the signaling data stream are modulated by an APM constellation and the bits of the spatial data stream are mapped onto the antenna activation patterns. In particular, the antenna activation patterns are designed corresponding to a $(M,k)$ error control code, where the one in a codeword indicates the corresponding antenna to be active and the zero represents the corresponding antenna to be idle. Note that the code length is the number of transmit antennas. In the following, we use the notation SCM $(M,k)$ to specify that $k$ bits are conveyed in the spatial domain. For the SCM $(M,k)$ employing the constellation diagram $\mathcal{S}$, a vector of $\log L$ bits can be delivered in each transmission, where $\log L = k + \log |\mathcal S|$. Specifically, in a bit stream, the first $k$ bits are encoded into $M$ bits, which serves as an antenna activation pattern. The remaining $\log |\mathcal S|$ information bits are modulated into an APM symbol. Let $\boldsymbol{c}$ denote a generic $(M,k)$ codeword, and the $(M,k)$ codebook $\mathcal{C}$ is with the cardinality of $2^k$. Note that the trivial all-zero codeword is excluded since at least one transmit antenna is required to be active. By properly designing the codebook $\mathcal{C}$, the minimum Hamming distance can be enlarged and the reliability of antenna detection can be improved accordingly.
\begin{comment}
\begin{table}[!t]\label{mapping74}
\centering
\caption{Bit-to-antenna Mapping of the Proposed SCM $(7,4)$ with Minimum Hamming Distance of $3$.}
%\setlength{\tabcolsep}{3pt}
\begin{tabular}{ccc}
\hline
Spatial information bits&
Index of activated Tx&
Codewords\\
\hline
$0000$&
$\#7$&
$0000001$\\
$0001$&
$\#5,\#6$&
$0000110$\\
$0010$&
$\#3,\#4$&
$0011000$\\
$0011$&
$\#3,\#4,\#5,\#6,\#7$&
$0011111$\\
$0100$&
$\#2,\#4,\#6,\#7$&
$0101011$\\
$0101$&
$\#2,\#4,\#5$&
$0101100$\\
$0110$&
$\#2,\#3,\#6$&
$0110010$\\
$0111$&
$\#2,\#3,\#5,\#7$&
$0110101$\\
$1000$&
$\#1,\#4,\#6$&
$1001010$\\
$1001$&
$\#1,\#4,\#5,\#7$&
$1001101$\\
$1010$&
$\#1,\#3,\#6,\#7$&
$1010011$\\
$1011$&
$\#1,\#3,\#5$&
$1010100$\\
$1100$&
$\#1,\#2$&
$1100000$\\
$1101$&
$\#1,\#2,\#5,\#6,\#7$&
$1100111$\\
$1110$&
$\#1,\#2,\#3,\#4,\#7$&
$1111001$\\
$1111$&
$\#1,\#2,\#3,\#4,\#5,\#6$&
$1111110$\\
\hline
\end{tabular}
\end{table}
\end{comment}

Assuming the APM symbol $\mathsf{s}$ and the codeword $\boldsymbol{\mathsf{c}} \in \mathcal{C}$ are transmitted, the received signal $\boldsymbol{\mathsf{r}} \in \mathbb{C}^N$ is formulated as
\begin{align}
\boldsymbol{\mathsf{r}} &= \boldsymbol{\mathsf{H}}\boldsymbol{\mathsf{x}}+\boldsymbol{\mathsf{u}} \label{receive_signal} \\
\label{y_SCM}&= \frac{1}{\sqrt{w(c)}}\boldsymbol{\mathsf{H}}\boldsymbol{\mathsf{c}}\mathsf{s}+\boldsymbol{\mathsf{u}}%, \quad \ell \in \mathcal{I}_L, \quad l \in \mathcal{I}_\mathfrak{L}
\end{align}
where $\boldsymbol{\mathsf{x}} \in \mathcal{X}$ denotes the resultant SCM symbol with the transmit power normalized, i.e., $\mathbb{E}\,\|\boldsymbol{\mathsf{x}}\|^2 = 1$; $\mathcal{X}$ represents the alphabet of the SCM symbol; $w(c)$ is the Hamming weight of the codeword $\boldsymbol{\mathsf{c}}$.
%%%%%%%%%  This is the beginning of Remark 1   %%%%%%%%%%%%%%%%%%%%%%
\begin{remark}
A flexible transmission rate of the spatial information is achieved by the $(M,k)$ codebook, ranging from $1$ to $M-1$ bits/sec/Hz. This design also allows a practical tradeoff between the system reliability and the efficiency. For example, the bit-to-antenna mapping can be implemented by the $(7,4)$ and $(7,3)$ Hamming codes, in which the spectral efficiencies in the spatial domain are $4$ and $3$ bits/sec/Hz, respectively, and the minimum Hamming distances are $3$ and $4$, respectively.\footnote{They will be compared to the existing GSM scheme in the simulation in Section \ref{Performance_Analysis}. }
\end{remark}
%%%%%%%%%  This is the end of Remark 1   %%%%%%%%%%%%%%%%%%%%%%
\begin{comment}
\begin{table}[!t]\label{mapping73}
\centering
\caption{Bit-to-antenna Mapping of the Proposed SCM $(7,3)$ with Minimum Hamming Distance of $4$.}
%\setlength{\tabcolsep}{3pt}
\begin{tabular}{ccc}
\hline
Spatial information bits&
Index of activated Tx&
Codewords\\
\hline
$000$&
$\#7$&
$0000001$\\
$001$&
$\#4,\#5,\#6$&
$0001110$\\
$010$&
$\#2,\#3,\#6$&
$0110010$\\
$011$&
$\#2,\#3,\#4,\#5,\#7$&
$0111101$\\
$100$&
$\#1,\#3,\#5$&
$1010100$\\
$101$&
$\#1,\#3,\#4,\#6,\#7$&
$1011011$\\
$110$&
$\#1,\#2,\#5,\#6,\#7$&
$1100111$\\
$111$&
$\#1,\#2,\#4$&
$1101000$\\
\hline
\end{tabular}
\end{table}
\end{comment}
%%%%%%%%%  This is the beginning of Remark 2   %%%%%%%%%%%%%%%%%%%%%%
\begin{remark}
The operation of coding-over-antenna is compatible with most of the existing performance enhancing techniques, such as the conventional coded spatial modulation schemes, which apply the channel coding in the time domain to improve the reliability of the detection of the APM symbols directly.
\end{remark}
\subsection{Optimal ML Detection}
The optimal detector follows from the ML criterion since each of the SSM symbols is transmitted with equal probability. Assuming the perfect CSI at the receiver, the APM symbols and the spatial codewords are jointly detected using the ML detector, i.e.,
\begin{align}
(\hat{\boldsymbol{c}}, \hat s) &= \arg \mathop {\min }\limits_{\boldsymbol{\mathsf{c}} \in \mathcal{C}, \mathsf{s} \in \mathcal{S}} {\big\| {\boldsymbol{\mathsf{r}} - \frac{1}{\sqrt{w(c)}}\boldsymbol{\mathsf{H}}{\boldsymbol{\mathsf{c}}\mathsf{s}} } \big\|^2}\\
&= \arg \mathop {\min }\limits_{\boldsymbol{\mathsf{c}} \in \mathcal{C}, \mathsf{s} \in \mathcal{S}} {\frac{1}{\sqrt{w(c)}}\|\boldsymbol{\mathsf{H}}{\boldsymbol{\mathsf{c}}}\mathsf{s}\|^2 - 2\Re{\{\boldsymbol{\mathsf{r}}^\dagger \boldsymbol{\mathsf{H}}{\boldsymbol{\mathsf{c}}}\mathsf{s}\}}}. \label{Optimal_Detection}
\end{align}
From \eqref{Optimal_Detection}, the complexity of the optimal ML detector is $\mathcal{O}(L + N2^{k+1})$, which grows rapidly with the size of the spatial constellation, i.e., $2^k$.

\subsection{Suboptimal Detection}
Since the ML detection is computationally intensive,
%plenty of effort has been directed to the development of the computationally-efficient detectors. Specifically, a suboptimal detector based on the maximum ratio combining was proposed in \cite{Mesleh_R2008} to first determine the transmit antenna indices and then demodulate the APM symbols. Although the computational cost is reduced, the detection performance degrades. In addition, two sphere decoders were developed in \cite{Younis_A2013}, i.e., the transmitter-centric sphere decoding and the receiver-centric sphere decoding. Both techniques offer remarkable reduction
%in computational complexity compared to the ML detector, however, the error performance of the sphere decoders is closely related to the choice of the initial radius. More recently, a compressive sensing based detection is proposed to strike a better balance between the performance and complexity than other detection schemes \cite{Xiao_L2017}.
in this paper, we propose a low complexity detector which has the following two steps. The first step is to collect a set of possible antenna activation patterns and the second step is to use the ML detection to jointly estimate the transmit antenna indices and the modulated symbols. The searching space is reduced by excluding part of the antenna activation patterns and thus the computational complexity is reduced. Let $\boldsymbol{\mathsf{t}}$ denote the product of the channel $\boldsymbol{\mathsf{H}}$ and the codeword $\boldsymbol{\mathsf{c}}$
\begin{equation}
\boldsymbol{\mathsf{t}} = \boldsymbol{\mathsf{H}}\boldsymbol{\mathsf{c}}.
\end{equation}

In the absence of noise, the received signal $\boldsymbol{\mathsf{r}}$ in \eqref{y_SCM} is equal to $\boldsymbol{\mathsf{t}}$. Therefore, a small bias or angle between $\boldsymbol{\mathsf{r}}$ and $\boldsymbol{\mathsf{t}}$ indicates that the codeword $\boldsymbol{\mathsf{c}}$ tends to be transmitted. Specifically, let $\alpha$ denote the angle between $\boldsymbol{\mathsf{t}}$ and $\boldsymbol{\mathsf{r}}$, and we have
\begin{equation}
\alpha = \arccos \frac{|\langle\boldsymbol{\mathsf{t}}, \boldsymbol{\mathsf{r}}\rangle|}{\|\boldsymbol{\mathsf{t}}\| \cdot \|\boldsymbol{\mathsf{r}}\|}.
\end{equation}
Then a candidate set of $\boldsymbol{\mathsf{t}}$ can be determined, which has the smaller angles than others. Let $\mathcal{T}$ denote the candidate set. For each possible $\boldsymbol{\mathsf{t}}$ in the candidate set $\mathcal{T}$, the antenna indices and the APM symbol are jointly detected as
\begin{equation}
(\hat{\boldsymbol{t}}, \hat s) = \arg \min\limits_{\boldsymbol{\mathsf{t}} \in \mathcal{T}, \mathsf{s} \in \mathcal{S}} {\Big\| {\boldsymbol{\mathsf{r}} - \frac{1}{\sqrt{w(c)}}\boldsymbol{\mathsf{t}}\mathsf{s} } \Big\|^2}.
\end{equation}

\section{Capacity and BER Analysis}\label{Performance_Analysis}
In this section, the capacity and the reliability of the proposed SCM are analyzed in various channel scenarios, including Rayleigh, Rician, Nakagami-$m$, the spatial correlation, and the channel uncertainties. It is shown that these two performance metrics are both improved as the minimum Hamming distance increases.
\subsection{Capacity Analysis}
The SCM symbols are with finite input of the wireless channel since they are constituted by both the antenna activation pattern indices and the APM modulated symbols. Regarding the output of the channel, the received signals are continuous due to the fading and the additive Gaussian noise. Consequently, the discrete-input continuous-output capacity is derived for the SCM scheme.\footnote{Similar analysis was conducted for the generalized precoding aided spatial modulation in \cite{Zhang_R2015}. In this paper, we follow the convention of the derivation and further provide a tractable capacity lower bound.} In the following, we first present the exact expression of the capacity. Then, we provide a lower bound to exhibit the effect of the minimum Hamming distance in the capacity.
\subsubsection{Derivation of Capacity}
The capacity of the SCM is evaluated by maximizing the mutual information between the input and output of the channel. Specifically, we provide Lemma $1$ as follows:
\begin{lemma}
The ergodic capacity of the proposed SCM is given by
\begin{equation}\label{exact_capacity}
C = \log L - \frac{1}{L}\sum\limits_{i = 1}^L {\mathbb{E}_{\boldsymbol{\mathsf{H}},\boldsymbol{\mathsf{u}}}\log\sum\limits_{j=1}^Le^{ \frac{1}{\sigma^2} (\|\boldsymbol{\mathsf{u}}\|^2 - \|\boldsymbol{\mathsf{u}}+\boldsymbol{\mathsf{H}}(\boldsymbol{\mathsf{x}}_j - \boldsymbol{\mathsf{x}}_i)\|^2)}}.
\end{equation}
\end{lemma}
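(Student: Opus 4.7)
The plan is to interpret the capacity as the mutual information $I(\boldsymbol{\mathsf{x}};\boldsymbol{\mathsf{r}}\mid\boldsymbol{\mathsf{H}})$ averaged over the fading, and to exploit the fact that the SCM alphabet $\mathcal{X}$ has size $L$ with equiprobable inputs (since the spatial codewords and the APM symbols are driven by i.i.d.\ uniform bits through a bijective mapping on blocks of $\log L$ bits). Writing $I(\boldsymbol{\mathsf{x}};\boldsymbol{\mathsf{r}}\mid\boldsymbol{\mathsf{H}})=H(\boldsymbol{\mathsf{x}})-H(\boldsymbol{\mathsf{x}}\mid\boldsymbol{\mathsf{r}},\boldsymbol{\mathsf{H}})=\log L-H(\boldsymbol{\mathsf{x}}\mid\boldsymbol{\mathsf{r}},\boldsymbol{\mathsf{H}})$ immediately isolates the $\log L$ term, so the whole task reduces to an explicit evaluation of the conditional entropy of the input given the received vector and the channel realisation.

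For that conditional entropy, first I would write the likelihood $p(\boldsymbol{r}\mid\boldsymbol{x}_j,\boldsymbol{H})\propto\exp(-\|\boldsymbol{r}-\boldsymbol{H}\boldsymbol{x}_j\|^2/\sigma^2)$ implied by \eqref{receive_signal} and the CSCG noise, apply Bayes' rule with the uniform prior, and simplify the posterior into the ratio form
\begin{equation*}
P(\boldsymbol{\mathsf{x}}=\boldsymbol{x}_i\mid\boldsymbol{r},\boldsymbol{H})=\Bigl(\textstyle\sum_{j=1}^{L}e^{(\|\boldsymbol{r}-\boldsymbol{H}\boldsymbol{x}_i\|^2-\|\boldsymbol{r}-\boldsymbol{H}\boldsymbol{x}_j\|^2)/\sigma^2}\Bigr)^{-1}.
\end{equation*}
Conditioning on $\boldsymbol{x}_i$ actually being transmitted allows the substitution $\boldsymbol{r}=\boldsymbol{H}\boldsymbol{x}_i+\boldsymbol{u}$, so the two squared norms in the exponent collapse to $\|\boldsymbol{u}\|^2$ and $\|\boldsymbol{u}+\boldsymbol{H}(\boldsymbol{x}_i-\boldsymbol{x}_j)\|^2$. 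I would then unfold $H(\boldsymbol{\mathsf{x}}\mid\boldsymbol{\mathsf{r}},\boldsymbol{\mathsf{H}})=-\tfrac{1}{L}\sum_{i}\int p(\boldsymbol{r}\mid\boldsymbol{x}_i,\boldsymbol{H})\log P(\boldsymbol{x}_i\mid\boldsymbol{r},\boldsymbol{H})\,d\boldsymbol{r}$, recognise each integral as $\mathbb{E}_{\boldsymbol{\mathsf{u}}}$ under the CSCG law, and read off the log-sum-exp that appears in \eqref{exact_capacity}.

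The last step is to take the outer expectation over $\boldsymbol{\mathsf{H}}$ to obtain the ergodic version, and to replace $\boldsymbol{x}_i-\boldsymbol{x}_j$ by $\boldsymbol{x}_j-\boldsymbol{x}_i$ inside the squared norm, which is cosmetic because the joint distribution of $\{\|\boldsymbol{\mathsf{u}}+\boldsymbol{\mathsf{H}}(\boldsymbol{x}_i-\boldsymbol{x}_j)\|^2\}_j$ coincides with that of $\{\|\boldsymbol{\mathsf{u}}+\boldsymbol{\mathsf{H}}(\boldsymbol{x}_j-\boldsymbol{x}_i)\|^2\}_j$ under the change of variable $\boldsymbol{\mathsf{u}}\mapsto-\boldsymbol{\mathsf{u}}$, which preserves the CSCG law. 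The main obstacle I anticipate is not analytical but notational: the indices $i$ and $j$ play genuinely different roles (the hypothesis against whose law the expectation is taken versus the dummy summation index inside the posterior denominator), and keeping them straight through the Bayes step is where sign or placement slips would creep in. A short caveat worth flagging is that, because the SCM bit-to-symbol mapping imposes a uniform distribution on $\mathcal{X}$, the expression \eqref{exact_capacity} is the achievable rate under this operational prior rather than the Shannon capacity maximised over all input distributions, consistent with the convention adopted in \cite{Zhang_R2015} cited in the footnote.
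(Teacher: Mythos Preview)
Your proposal is correct and follows essentially the same route as the paper: both compute $I(\boldsymbol{\mathsf{x}};\boldsymbol{\mathsf{r}}\mid\boldsymbol{\mathsf{H}})$ under the uniform input, form the likelihood ratio $p(\boldsymbol{\mathsf{r}}\mid\boldsymbol{\mathsf{x}}_j,\boldsymbol{\mathsf{H}})/p(\boldsymbol{\mathsf{r}}\mid\boldsymbol{\mathsf{x}}_i,\boldsymbol{\mathsf{H}})$, substitute $\boldsymbol{\mathsf{r}}=\boldsymbol{\mathsf{H}}\boldsymbol{\mathsf{x}}_i+\boldsymbol{\mathsf{u}}$, and rewrite the remaining integral as $\mathbb{E}_{\boldsymbol{\mathsf{H}},\boldsymbol{\mathsf{u}}}$; your use of $H(\boldsymbol{\mathsf{x}})-H(\boldsymbol{\mathsf{x}}\mid\boldsymbol{\mathsf{r}},\boldsymbol{\mathsf{H}})$ rather than the direct integral definition is only a cosmetic reorganisation. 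Your observation about the $\boldsymbol{\mathsf{u}}\mapsto-\boldsymbol{\mathsf{u}}$ symmetry justifying the swap $\boldsymbol{\mathsf{x}}_i-\boldsymbol{\mathsf{x}}_j\leftrightarrow\boldsymbol{\mathsf{x}}_j-\boldsymbol{\mathsf{x}}_i$ is a detail the paper leaves implicit, and your caveat on ``capacity'' versus ``achievable rate under uniform input'' is well taken (the paper asserts uniform is maximising by appeal to \cite{Zhang_R2015}).
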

\begin{proof}
The proof follows from the following derivation
\begin{align}
C &= \mathbb{E}_{\boldsymbol{\mathsf{H}}}\max\limits_{\{p(\boldsymbol{\mathsf{x}}_i)\}_{1}^L} I(\boldsymbol{\mathsf{x}};\boldsymbol{\mathsf{r}}|\boldsymbol{\mathsf{H}})\\
&= \frac{1}{L}\sum\limits_{i = 1}^L \mathbb{E}_{\boldsymbol{\mathsf{H}}} \int_{ \boldsymbol{\mathsf{r}} } {p(\boldsymbol{\mathsf{r}}|\boldsymbol{\mathsf{x}}_i, \boldsymbol{\mathsf{H}})\log {\frac{{p\left( \boldsymbol{\mathsf{r}} | \boldsymbol{\mathsf{x}}_i, \boldsymbol{\mathsf{H}} \right)}}{{\sum\nolimits_{j = 1}^L {p(\boldsymbol{\mathsf{r}},\boldsymbol{\mathsf{x}}_j|\boldsymbol{\mathsf{H}})} }}} }d\boldsymbol{\mathsf{r}} \label{DCMC_Capacity} \\
&= \frac{1}{L}\sum\limits_{i = 1}^L \mathbb{E}_{\boldsymbol{\mathsf{H}}} \int_{ \boldsymbol{\mathsf{r}} } {p(\boldsymbol{\mathsf{r}}|\boldsymbol{\mathsf{x}}_i, \boldsymbol{\mathsf{H}})\log {\frac{p\left( \boldsymbol{\mathsf{r}} | \boldsymbol{\mathsf{x}}_i, \boldsymbol{\mathsf{H}} \right)}{\sum\nolimits_{j = 1}^L {p(\boldsymbol{\mathsf{r}}|\boldsymbol{\mathsf{x}}_j, \boldsymbol{\mathsf{H}})p(\boldsymbol{\mathsf{x}}_j|\boldsymbol{\mathsf{H}})}}} }d\boldsymbol{\mathsf{r}} \label{complex_capacity_1}\\
&= -\frac{1}{L}\sum\limits_{i = 1}^L \mathbb{E}_{\boldsymbol{\mathsf{H}}} \int_{ \boldsymbol{\mathsf{r}} } {p(\boldsymbol{\mathsf{r}}|\boldsymbol{\mathsf{x}}_i, \boldsymbol{\mathsf{H}})\log \frac{1}{L}\sum\limits_{j = 1}^L{\frac{ {p(\boldsymbol{\mathsf{r}}|\boldsymbol{\mathsf{x}}_j, \boldsymbol{\mathsf{H}})}}{p\left( \boldsymbol{\mathsf{r}} | \boldsymbol{\mathsf{x}}_i, \boldsymbol{\mathsf{H}} \right)}} }d\boldsymbol{\mathsf{r}} \label{capacity_interchange}\\
&= \log L - \frac{1}{L}\sum\limits_{i = 1}^L \mathbb{E}_{\boldsymbol{\mathsf{H}}} {\int_{ \boldsymbol{\mathsf{r}} } {p(\boldsymbol{\mathsf{r}}|\boldsymbol{\mathsf{x}}_i,\boldsymbol{\mathsf{H}})}\log\sum\limits_{j=1}^Le^{ \frac{1}{\sigma^2} (\|\boldsymbol{\mathsf{u}}\|^2 - \|\boldsymbol{\mathsf{u}}+\boldsymbol{\mathsf{H}}(\boldsymbol{\mathsf{x}}_j - \boldsymbol{\mathsf{x}}_i)\|^2)} d\boldsymbol{\mathsf{r}}} \label{capacity_3} \\
&= \log L - \frac{1}{L}\sum\limits_{i = 1}^L \mathbb{E}_{\boldsymbol{\mathsf{H}}} {\int_{ \boldsymbol{\mathsf{u}} } {p(\boldsymbol{\mathsf{u}})}\log\sum\limits_{j=1}^Le^{ \frac{1}{\sigma^2} (\|\boldsymbol{\mathsf{u}}\|^2 - \|\boldsymbol{\mathsf{u}}+\boldsymbol{\mathsf{H}}(\boldsymbol{\mathsf{x}}_j - \boldsymbol{\mathsf{x}}_i)\|^2)} d\boldsymbol{\mathsf{u}}} \label{capacity_4} \\
&= \log L - \frac{1}{L}\sum\limits_{i = 1}^L {\mathbb{E}_{\boldsymbol{\mathsf{H}},\boldsymbol{\mathsf{u}}}\log\sum\limits_{j=1}^Le^{ \frac{1}{\sigma^2} (\|\boldsymbol{\mathsf{u}}\|^2 - \|\boldsymbol{\mathsf{u}}+\boldsymbol{\mathsf{H}}(\boldsymbol{\mathsf{x}}_j - \boldsymbol{\mathsf{x}}_i)\|^2)}} \label{capacity_5}
\end{align}
where $p(\cdot|\cdot)$ is the conditional probability density function; $I(\boldsymbol{\mathsf{x}};\boldsymbol{\mathsf{r}}|\boldsymbol{\mathsf{H}})$ is the mutual information between the discrete transmit signal $\boldsymbol{\mathsf{x}}$ and the continuous received signal $\boldsymbol{\mathsf{r}}$ for a given channel. The equation \eqref{DCMC_Capacity} follows from the fact that the maximization is achieved when each symbol is transmitted with equal probability, i.e., $p(\boldsymbol{\mathsf{x}}_i) = \frac{1}{L}$, $i \in \mathcal{I}_L$, which was validated in \cite{Zhang_R2015}. The equation \eqref{complex_capacity_1} is obtained by the chain rule, and \eqref{capacity_interchange} follows from the fact that $p(\boldsymbol{\mathsf{x}}_j|\boldsymbol{\mathsf{H}}) = p(\boldsymbol{\mathsf{x}}_j) = \frac{1}{L}, j \in \mathcal{I}_L$. The equation \eqref{capacity_3} follows from
\begin{align}
\frac{p(\boldsymbol{\mathsf{r}}|\boldsymbol{\mathsf{x}}_j, \boldsymbol{\mathsf{H}})}{p(\boldsymbol{\mathsf{r}}|\boldsymbol{\mathsf{x}}_i, \boldsymbol{\mathsf{H}})} &= e^{ \frac{1}{\sigma^2} (\|\boldsymbol{\mathsf{u}}\|^2 - \|\boldsymbol{\mathsf{u}}+\boldsymbol{\mathsf{H}}(\boldsymbol{\mathsf{x}}_j - \boldsymbol{\mathsf{x}}_i)\|^2)} \label{function_1}
\end{align}
where $p({\boldsymbol{\mathsf{r}}|\boldsymbol{\mathsf{x}}}, \boldsymbol{\mathsf{H}})$ is the conditional probability of receiving $\boldsymbol{\mathsf{r}}$ given the SCM symbol $\boldsymbol{\mathsf{x}}$ and the channel $\boldsymbol{\mathsf{H}}$. Explicitly, $p({\boldsymbol{\mathsf{r}}|\boldsymbol{\mathsf{x}}}, \boldsymbol{\mathsf{H}})$ is Gaussian and is expressed as
\begin{align}
p({\boldsymbol{\mathsf{r}}|\boldsymbol{\mathsf{x}}}, \boldsymbol{\mathsf{H}}) &= \frac{1}{\pi^N \sigma^{2N}}e^{-\frac{1}{\sigma^2}\|\boldsymbol{\mathsf{r}}-{\boldsymbol{\mathsf{H}}}\boldsymbol{\mathsf{x}}\|^2} \\
&= \frac{1}{\pi^N \sigma^{2N}}e^{-\frac{1}{\sigma^2}\|\boldsymbol{\mathsf{u}}\|^2}. \label{condition_prob}
\end{align}
The equation \eqref{capacity_4} follows \eqref{condition_prob}, and \eqref{capacity_5} follows the definition of the expectation, which concludes the proof.
\end{proof}

It should be noted that the capacity in \eqref{exact_capacity} lacks tractable expression, and thus provides limited insight. The capacity is numerically evaluated with large number of iterations and the results are presented in Section \ref{Simulations}.
\subsubsection{Capacity Lower Bound}
We first provide a lemma to specify a lower bound of the capacity, which is especially tight at high SNR. Then, we illustrate the relation between the minimum Hamming distance and the capacity lower bound.
%%%%%%%% This is the begining of the first lemma %%%%%%%%%
\begin{lemma}
The capacity of the proposed SCM is lower bounded by
\begin{equation}
C \ge 2\log L - \mathcal{F}(\sigma, \bm{\mu}, \bm{\varSigma}) \label{f_function}
\end{equation}
where $\bm{\mu}$ and $\bm{\varSigma}$ are the mean vector and the variance matrix of the channel matrix, respectively; The function in \eqref{f_function} is defined as
\begin{equation}\label{F_function}
\mathcal{F}(\sigma, \bm{\mu}, \bm{\varSigma}) \triangleq \log \sum\limits_{i=1}^L \sum\limits_{j=1}^L \frac{1}{|\boldsymbol{I}+\frac{1}{2\sigma^2}\bm{\varSigma}\boldsymbol{\mathsf{A}}_{ij}|}e^{-\frac{1}{2\sigma^2}\bm{\mu}^\dagger \boldsymbol{\mathsf{A}}_{ij}\big(\boldsymbol{I}+\frac{1}{2\sigma^2}\bm{\varSigma}\boldsymbol{\mathsf{A}}_{ij}\big)^{-1}\bm{\mu}}
\end{equation}
where $\boldsymbol{\mathsf{A}}_{ij} = \boldsymbol{I}_{N} \otimes \boldsymbol{\updelta}_{ij}\boldsymbol{\updelta}_{ij}^\dagger$ and we have the shorthand $\boldsymbol{\updelta}_{ij} = \boldsymbol{\mathsf{x}}_i - \boldsymbol{\mathsf{x}}_j$.
\end{lemma}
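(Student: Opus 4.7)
The plan is to obtain a tractable lower bound on the capacity in~\eqref{exact_capacity} by two invocations of Jensen's inequality together with the moment-generating function (MGF) of a Hermitian quadratic form in a complex Gaussian vector.

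First I would recast the capacity in the standard form $C=\mathbb{E}_{\boldsymbol{\mathsf{H}}}[h(\boldsymbol{\mathsf{r}}|\boldsymbol{\mathsf{H}})]-N\log(\pi e\sigma^2)$ and lower-bound the output differential entropy of the equiprobable Gaussian mixture $p(\boldsymbol{\mathsf{r}}|\boldsymbol{\mathsf{H}})=\tfrac{1}{L}\sum_j p(\boldsymbol{\mathsf{r}}|\boldsymbol{\mathsf{x}}_j,\boldsymbol{\mathsf{H}})$. Decomposing $h(\boldsymbol{\mathsf{r}}|\boldsymbol{\mathsf{H}})=\log L-\tfrac{1}{L}\sum_i \mathbb{E}_{\boldsymbol{\mathsf{r}}|\boldsymbol{\mathsf{x}}_i,\boldsymbol{\mathsf{H}}}[\log \sum_j p(\boldsymbol{\mathsf{r}}|\boldsymbol{\mathsf{x}}_j,\boldsymbol{\mathsf{H}})]$ and applying Jensen's inequality (concavity of $\log$) to pass the inner expectation inside the logarithm delivers the first inequality.

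The next ingredient is the Gaussian cross integral $\int p(\boldsymbol{\mathsf{r}}|\boldsymbol{\mathsf{x}}_i,\boldsymbol{\mathsf{H}})\,p(\boldsymbol{\mathsf{r}}|\boldsymbol{\mathsf{x}}_j,\boldsymbol{\mathsf{H}})\,d\boldsymbol{\mathsf{r}}$: completing the square as $\|\boldsymbol{\mathsf{r}}-\boldsymbol{\mathsf{H}}\boldsymbol{\mathsf{x}}_i\|^2+\|\boldsymbol{\mathsf{r}}-\boldsymbol{\mathsf{H}}\boldsymbol{\mathsf{x}}_j\|^2=2\|\boldsymbol{\mathsf{r}}-\boldsymbol{\mathsf{H}}(\boldsymbol{\mathsf{x}}_i+\boldsymbol{\mathsf{x}}_j)/2\|^2+\tfrac{1}{2}\|\boldsymbol{\mathsf{H}}\boldsymbol{\updelta}_{ij}\|^2$ and integrating over $\boldsymbol{\mathsf{r}}\in\mathbb{C}^N$ yields $(2\pi\sigma^2)^{-N}e^{-\|\boldsymbol{\mathsf{H}}\boldsymbol{\updelta}_{ij}\|^2/(2\sigma^2)}$, which is exactly where the signature exponent $1/(2\sigma^2)$ in $\mathcal{F}$ originates (note that this is \emph{not} the Bhattacharyya factor $1/(4\sigma^2)$, because we integrate $p_i p_j$ rather than $\sqrt{p_i p_j}$). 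To collapse the outer arithmetic mean $\tfrac{1}{L}\sum_i\log S_i$ into a single logarithm of a double sum indexed by $(i,j)$, I would then invoke the AM--GM inequality in its logarithmic form $\tfrac{1}{L}\sum_i\log S_i\le\log\bigl(\tfrac{1}{L}\sum_i S_i\bigr)$, which contributes the second $\log L$ and produces the leading $2\log L$.

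Finally, taking the expectation over $\boldsymbol{\mathsf{H}}$ and applying Jensen's once more to swap $\mathbb{E}_{\boldsymbol{\mathsf{H}}}[\log(\cdot)]\le\log \mathbb{E}_{\boldsymbol{\mathsf{H}}}[\cdot]$, the remaining channel expectation $\mathbb{E}_{\boldsymbol{\mathsf{H}}}[e^{-\|\boldsymbol{\mathsf{H}}\boldsymbol{\updelta}_{ij}\|^2/(2\sigma^2)}]$ is closed-form. Writing $\|\boldsymbol{\mathsf{H}}\boldsymbol{\updelta}_{ij}\|^2=\boldsymbol{\mathsf{h}}^\dagger\boldsymbol{\mathsf{A}}_{ij}\boldsymbol{\mathsf{h}}$ for the vectorization $\boldsymbol{\mathsf{h}}=\mathrm{vec}(\boldsymbol{\mathsf{H}}^\dagger)\sim\mathcal{CN}(\bm{\mu},\bm{\varSigma})$ with $\boldsymbol{\mathsf{A}}_{ij}=\boldsymbol{I}_N\otimes\boldsymbol{\updelta}_{ij}\boldsymbol{\updelta}_{ij}^\dagger$, the classical identity $\mathbb{E}[e^{-\boldsymbol{\mathsf{h}}^\dagger\boldsymbol{A}\boldsymbol{\mathsf{h}}}]=|\boldsymbol{I}+\bm{\varSigma}\boldsymbol{A}|^{-1}e^{-\bm{\mu}^\dagger\boldsymbol{A}(\boldsymbol{I}+\bm{\varSigma}\boldsymbol{A})^{-1}\bm{\mu}}$ evaluated at $\boldsymbol{A}=\boldsymbol{\mathsf{A}}_{ij}/(2\sigma^2)$ reproduces each summand of $\mathcal{F}$ in~\eqref{F_function}. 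The main obstacle will be the bookkeeping of the $\sigma^2$-dependent constants across the two Jensen applications and the Gaussian integrations, so that the $(2\pi\sigma^2)^{-N}$ from $\int p_i p_j\,d\boldsymbol{\mathsf{r}}$ reconciles with the noise entropy $N\log(\pi e\sigma^2)$ to yield the announced form $2\log L-\mathcal{F}$; a secondary care point is fixing the vectorization convention so that $\boldsymbol{\mathsf{A}}_{ij}$ takes the advertised Kronecker form and the MGF identity applies cleanly.
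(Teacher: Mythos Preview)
Your proposal is correct and follows essentially the same route as the paper: Jensen on the output entropy, the Gaussian cross-integral producing the $e^{-\|\boldsymbol{\mathsf{H}}\boldsymbol{\updelta}_{ij}\|^2/(2\sigma^2)}$ factor, and the MGF identity for a Hermitian quadratic form in a complex Gaussian vector. The only organizational difference is that the paper reaches the intermediate bound $C\ge 2\log L-\mathbb{E}_{\boldsymbol{\mathsf{H}}}\log\sum_{i,j}e^{-\|\boldsymbol{\mathsf{H}}\boldsymbol{\updelta}_{ij}\|^2/(2\sigma^2)}$ in a single Jensen step via the collision-entropy bound $h(\boldsymbol{\mathsf{r}}|\boldsymbol{\mathsf{H}})\ge -\log\int p^2(\boldsymbol{\mathsf{r}}|\boldsymbol{\mathsf{H}})\,d\boldsymbol{\mathsf{r}}$, whereas you get there by per-component Jensen followed by AM--GM; and you correctly identify the final channel-averaging as another Jensen application, which the paper writes (loosely) as an equality.
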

%%%%%%%% This is the end of the first lemma %%%%%%%%%%%
%%%%%%%% This is the begining of the proof for the first proposition %%%%%%%%%
\newpage
\begin{proof}
The proof follows from the following derivation
\begin{align}
C &\ge -\mathbb{E}_{\boldsymbol{\mathsf{H}}}\log \mathbb{E}_{\boldsymbol{\mathsf{r}}}\;p(\boldsymbol{\mathsf{r}}|\boldsymbol{\mathsf{H}}) - \mathbb{E}_{\boldsymbol{\mathsf{H}}}\;h(\boldsymbol{\mathsf{r}}|\boldsymbol{\mathsf{x}},\boldsymbol{\mathsf{H}}) \label{cap_ineq_1} \\
&= -\mathbb{E}_{\boldsymbol{\mathsf{H}}}\log \mathbb{E}_{\boldsymbol{\mathsf{r}}}\;p(\boldsymbol{\mathsf{r}}|\boldsymbol{\mathsf{H}}) - N\log\pi e \sigma^2 \label{cap_ineq_2} \\
&= -\mathbb{E}_{\boldsymbol{\mathsf{H}}}\log \int_{\boldsymbol{\mathsf{r}}} \bigg(\frac{1}{L\pi^{N} \sigma^{2N}}\sum\limits_{i = 1}^L e^{-\frac{1}{\sigma^2}\|\boldsymbol{\mathsf{r}}-{\boldsymbol{\mathsf{H}}}\boldsymbol{\mathsf{x}}_i\|^2}\bigg)^2 d\boldsymbol{\mathsf{r}} - N\log\pi e \sigma^2 \label{cap_eq_3} \\
&= -\mathbb{E}_{\boldsymbol{\mathsf{H}}}\log \sum\limits_{i = 1}^L \sum\limits_{j = 1}^L \int_{\boldsymbol{\mathsf{r}}} e^{-\frac{1}{\sigma^2}\|\boldsymbol{\mathsf{r}}-{\boldsymbol{\mathsf{H}}}\boldsymbol{\mathsf{x}}_i\|^2} e^{-\frac{1}{\sigma^2}\|\boldsymbol{\mathsf{r}}-{\boldsymbol{\mathsf{H}}}\boldsymbol{\mathsf{x}}_j\|^2} d\boldsymbol{\mathsf{r}} + 2\log L\pi^{N} \sigma^{2N} - N\log\pi e \sigma^2 \label{entropy_2} \\
&= - \mathbb{E}_{\boldsymbol{\mathsf{H}}}\log \sum\limits_{i=1}^L \sum\limits_{j=1}^L e^{-\frac{1}{2\sigma^2}\|\boldsymbol{\mathsf{H}}(\boldsymbol{\mathsf{x}}_i - \boldsymbol{\mathsf{x}}_j)\|^2} + 2\log L + N\log\pi e \sigma^2 - N\log\pi e \sigma^2 \label{cap_eq_4}\\
&= 2\log L - \mathbb{E}_{\boldsymbol{\mathsf{H}}}\log \sum\limits_{i=1}^L \sum\limits_{j=1}^L e^{-\frac{1}{2\sigma^2}\|\boldsymbol{\mathsf{H}}(\boldsymbol{\mathsf{x}}_i - \boldsymbol{\mathsf{x}}_j)\|^2} \label{capacity_2} \\
&= 2\log L - \mathcal{F}(\sigma, \bm{\mu}, \bm{\varSigma}). \label{f_function_alt}
\end{align}
The inequality \eqref{cap_ineq_1} follows from rewriting the mutual information $I(\boldsymbol{\mathsf{x}};\boldsymbol{\mathsf{r}}|\boldsymbol{\mathsf{H}})$ as
\begin{equation}\label{mutual_infor}
I(\boldsymbol{\mathsf{x}};\boldsymbol{\mathsf{r}}|\boldsymbol{\mathsf{H}}) = h(\boldsymbol{\mathsf{r}}|\boldsymbol{\mathsf{H}}) - h(\boldsymbol{\mathsf{r}}|\boldsymbol{\mathsf{x}},\boldsymbol{\mathsf{H}})
\end{equation}
where $h(\cdot|\cdot)$ is the conditional entropy function. Using the Jensen's inequality yields a lower bound of $h(\boldsymbol{\mathsf{r}}|\boldsymbol{\mathsf{H}})$ as
\begin{equation}
h(\boldsymbol{\mathsf{r}}|\boldsymbol{\mathsf{H}}) \ge -\log \mathbb{E}_{\boldsymbol{\mathsf{r}}}\;p(\boldsymbol{\mathsf{r}}|\boldsymbol{\mathsf{H}}).
\end{equation}
The equation \eqref{cap_ineq_2} follows from
\begin{align}\label{entropy_1}
h(\boldsymbol{\mathsf{r}}|\boldsymbol{\mathsf{x}},\boldsymbol{\mathsf{H}}) &= -\mathbb{E}\log p({\boldsymbol{\mathsf{r}}|\boldsymbol{\mathsf{x}}}, \boldsymbol{\mathsf{H}}) \\
&= N\log\pi \sigma^2 + \frac{\log e}{\sigma^2}\mathbb{E}\ \|\boldsymbol{\mathsf{r}}-{\boldsymbol{\mathsf{H}}}\boldsymbol{\mathsf{x}}\|^2 \label{entropy_6} \\
&= N\log\pi e \sigma^2 \label{entropy_7}
\end{align}
where \eqref{entropy_6} is obtained by plugging \eqref{condition_prob} into \eqref{entropy_1}, and \eqref{entropy_7} follows from the fact that $\boldsymbol{\mathsf{r}}-{\boldsymbol{\mathsf{H}}}\boldsymbol{\mathsf{x}} = \boldsymbol{\mathsf{u}}$. Next, \eqref{cap_eq_3} follows from
\begin{align}
p(\boldsymbol{\mathsf{r}}|\boldsymbol{\mathsf{H}}) &= \frac{1}{L\pi^{N} \sigma^{2N}}\sum\limits_{i = 1}^L e^{-\frac{1}{\sigma^2}\|\boldsymbol{\mathsf{r}}-{\boldsymbol{\mathsf{H}}}\boldsymbol{\mathsf{x}}_i\|^2} \\
\mathbb{E}_{\boldsymbol{\mathsf{r}}}\;p(\boldsymbol{\mathsf{r}}|\boldsymbol{\mathsf{H}}) &= \int_{\boldsymbol{\mathsf{r}}} \bigg(\frac{1}{L\pi^{N} \sigma^{2N}}\sum\limits_{i = 1}^L e^{-\frac{1}{\sigma^2}\|\boldsymbol{\mathsf{r}}-{\boldsymbol{\mathsf{H}}}\boldsymbol{\mathsf{x}}_i\|^2}\bigg)^2 d\boldsymbol{\mathsf{r}}.
\end{align}
The equation \eqref{entropy_2} is obtained by changing the order of summation and integration in \eqref{cap_eq_3}. The derivation of \eqref{cap_eq_4} follows from
\begin{align}
\int_{\boldsymbol{\mathsf{r}}} e^{-\frac{1}{\sigma^2}\|\boldsymbol{\mathsf{r}}-{\boldsymbol{\mathsf{H}}}\boldsymbol{\mathsf{x}}_i\|^2} e^{-\frac{1}{\sigma^2}\|\boldsymbol{\mathsf{r}}-{\boldsymbol{\mathsf{H}}}\boldsymbol{\mathsf{x}}_j\|^2} d\boldsymbol{\mathsf{r}} &= e^{-\frac{1}{\sigma^2}(\|\boldsymbol{\mathsf{H}}\boldsymbol{\mathsf{x}}_i\|^2 + \|\boldsymbol{\mathsf{H}}\boldsymbol{\mathsf{x}}_j\|^2)} \int_{\boldsymbol{\mathsf{r}}} e^{-\frac{1}{\sigma^2}(2\|\boldsymbol{\mathsf{r}}\|^2 - 2\Re\{\boldsymbol{\mathsf{r}}^\dagger(\boldsymbol{\mathsf{H}}\boldsymbol{\mathsf{x}}_i + \boldsymbol{\mathsf{H}}\boldsymbol{\mathsf{x}}_j)\})} d\boldsymbol{\mathsf{r}} \label{entropy_3} \\
&= e^{-\frac{1}{\sigma^2}(\|\boldsymbol{\mathsf{H}}\boldsymbol{\mathsf{x}}_i\|^2 + \|\boldsymbol{\mathsf{H}}\boldsymbol{\mathsf{x}}_j\|^2)}\prod\limits_{n = 1}^{N} \int_{\mathsf{r}_n} e^{-\frac{1}{\sigma^2}(2\|\mathsf{r}_n\|^2 - 2\Re(\mathsf{r}_n^\dagger[\boldsymbol{\mathsf{H}}\boldsymbol{\mathsf{x}}_i + \boldsymbol{\mathsf{H}}\boldsymbol{\mathsf{x}}_j]_n))}d\mathsf{r}_n \label{sigma} \\
&= \pi^{N}\sigma^{2N}e^{-\frac{1}{2\sigma^2}\|\boldsymbol{\mathsf{H}}(\boldsymbol{\mathsf{x}}_i - \boldsymbol{\mathsf{x}}_j)\|^2-N} \label{entropy_4}
\end{align}
where we have \eqref{entropy_3} by reorganizing the terms irrelevant to $\boldsymbol{\mathsf{r}}$; \eqref{sigma} follows from the fact that the elements in $\boldsymbol{\mathsf{r}} = [\mathsf{r}_1, \cdots, \mathsf{r}_{N}]^\textrm{T}$ are i.i.d. samples; \eqref{entropy_4} is attained by evaluating the integration. Finally, \eqref{capacity_2} and \eqref{f_function_alt} are obtained by removing the term $N\log\pi e\sigma^2$ and using the moment generating function (MGF) of the variate $\lambda = \| \boldsymbol{\mathsf{H}}\boldsymbol{\updelta}_{ij} \|^2$, i.e.,
\begin{equation}\label{good}
\mathbb{E}_{\boldsymbol{\mathsf{H}}}\log \sum\limits_{i=1}^L \sum\limits_{j=1}^L e^{-\frac{1}{2\sigma^2}\|\boldsymbol{\mathsf{H}}\boldsymbol{\updelta}_{ij}\|^2} = \log \sum\limits_{i=1}^L \sum\limits_{j=1}^L \mathcal{M}_\lambda\Big(-\frac{1}{2\sigma^2}\Big)
\end{equation}
where the MGF of a variate $\vartheta$ is defined as \cite{Simon_MK2005}
\begin{equation}\label{MGF_2}
\mathcal{M}_\vartheta(t) \triangleq \int_{0}^{\infty}p(\vartheta)e^{t\vartheta}d\vartheta.
\end{equation}
The Frobenius norm of $\| \boldsymbol{\mathsf{H}}\boldsymbol{\updelta}_{ij} \|^2$ can be further expanded as
\begin{align}
\| \boldsymbol{\mathsf{H}}\boldsymbol{\updelta}_{ij} \|^2 &= \textrm{vec}(\boldsymbol{\mathsf{H}})^\dagger(\boldsymbol{I}_{N} \otimes \boldsymbol{\updelta}_{ij}\boldsymbol{\updelta}_{ij}^\dagger)\textrm{vec}(\boldsymbol{\mathsf{H}})\\
&\triangleq \boldsymbol{\upomega}^\dagger\boldsymbol{\mathsf{A}}_{ij}\boldsymbol{\upomega}
\end{align}
where we have the vector $\boldsymbol{\upomega} = \textrm{vec}(\boldsymbol{\mathsf{H}})$ and the Hermitian matrix $\boldsymbol{\mathsf{A}}_{ij} = \boldsymbol{I}_{N} \otimes \boldsymbol{\updelta}_{ij}\boldsymbol{\updelta}_{ij}^\dagger$. The MGF of $\boldsymbol{\upomega}^\dagger\boldsymbol{\mathsf{A}}_{ij}\boldsymbol{\upomega}$ can be expressed as
\begin{equation}\label{MGF_1}
\mathcal{M}_\lambda(t) = \frac{1}{|\boldsymbol{I}-t\bm{\varSigma}\boldsymbol{\mathsf{A}}_{ij}|}e^{t\bm{\mu}^\dagger \boldsymbol{\mathsf{A}}_{ij}\big(\boldsymbol{I}-t\bm{\varSigma}\boldsymbol{\mathsf{A}}_{ij}\big)^{-1}\bm{\mu}}.
\end{equation}
Plugging \eqref{MGF_1} into \eqref{good}, we have a general lower bound of the capacity as shown in \eqref{f_function}, which concludes the proof.
\end{proof}

Based on Lemma $2$, the following remark illustrates the relation between the minimum Hamming distance and the capacity lower bound.
\begin{remark}
The capacity lower bound of the proposed SCM improves as the minimum Hamming distance increases.
\end{remark}

The remark can be verified by manipulating the lower bound. We consider the fading channel where the capacity of the SCM is higher than that of other fading channels, i.e., the Rayleigh fading channel. Specifically, the elements of the Rayleigh fading channel are i.i.d. samples following the CSCG distribution. Hence $\bm{\mu} = \boldsymbol{0}$ and $\bm{\varSigma} = \boldsymbol{I}$. From Lemma $2$, we obtain a simple lower bound of the system capacity as
\begin{align}
C &\ge 2\log L - \log \sum\limits_{i=1}^L \sum\limits_{j=1}^L \bigg( 1 + \frac{\|\boldsymbol{\updelta}_{ij}\|^2}{2\sigma^2} \bigg)^{-N}\\
&\ge \log L - \log\bigg(1 + \frac{L-1}{2}\bigg( 1 + \frac{1}{2\sigma^2}\min\limits_{i\neq j}\|\boldsymbol{\updelta}_{ij}\|^2 \bigg)^{-N}\bigg) \label{capacity_lower_1}
\end{align}
where $\min\limits_{i\neq j}\|\boldsymbol{\updelta}_{ij}\|^2 = \min\limits_{i\neq j} \|\boldsymbol{\mathsf{x}}_i - \boldsymbol{\mathsf{x}}_j\|^2$; \eqref{capacity_lower_1} follows from replacing $\|\boldsymbol{\updelta}_{ij}\|^2$ with $\min\limits_{i\neq j}\|\boldsymbol{\updelta}_{ij}\|^2$ and thus is upper bounded by
\begin{align}
\sum\limits_{i=1}^L \sum\limits_{j=1}^L \bigg( 1 + \frac{\|\boldsymbol{\updelta}_{ij}\|^2}{2\sigma^2} \bigg)^{-N} &= L + \sum\limits_{i=1}^L \sum\limits_{j=1, j \neq i}^L \bigg( 1 + \frac{\|\boldsymbol{\updelta}_{ij}\|^2}{2\sigma^2} \bigg)^{-N}\\
&\le L + \frac{1}{2}L(L-1)\bigg( 1 + \frac{1}{2\sigma^2}\min\limits_{i\neq j}\|\boldsymbol{\updelta}_{ij}\|^2 \bigg)^{-N}.
\end{align}
From \eqref{capacity_lower_1}, it shows that the system capacity can be improved by increasing $\min\limits_{i\neq j}\|\boldsymbol{\updelta}_{ij}\|^2$, which corresponds to increasing the minimum Hamming distance in the codebook.

\subsection{BER Analysis}
As previously discussed, the detection of the antenna indices is enhanced by coding-over-antenna, which also improves the demodulation of the APM symbols. In the following, we first present the BER upper bound of the proposed SCM scheme. Then, based on the upper bound, we reveal the effect of the minimum Hamming distance in the system reliability. Regarding the BER upper bound, we provide Lemma $3$ as follows:
\begin{lemma}
The BER of the proposed SCM is upper bounded by
\begin{equation}
{P_\mathrm{e}} \le \frac{1}{{\pi L\log L}} \mathcal{G}(\sigma, \bm{\mu}, \bm{\varSigma}) \label{ABER_2}
\end{equation}
where the function in \eqref{ABER_2} is defined as
\begin{equation}\label{G_function}
\mathcal{G}(\sigma, \bm{\mu}, \bm{\varSigma}) \triangleq \sum\limits_{i=1}^L {\sum\limits_{j=1}^L d({\boldsymbol{\mathsf{x}}_i},{\boldsymbol{\mathsf{x}}_j}) \int_0^{\frac{\pi}{2}}\frac{\exp(-\frac{1}{4\sigma^2\sin^2\theta}\bm{\mu}^\dagger \boldsymbol{\mathsf{A}}_{ij}(\boldsymbol{I}+\frac{1}{4\sigma^2\sin^2\theta}\bm{\varSigma}\boldsymbol{\mathsf{A}}_{ij})^{-1}\bm{\mu})}{|\boldsymbol{I}+\frac{1}{4\sigma^2\sin^2\theta}\bm{\varSigma}\boldsymbol{\mathsf{A}}_{ij}|}d\theta }
\end{equation}
and $d({\boldsymbol{\mathsf{x}}_i},{\boldsymbol{\mathsf{x}}_j})$ denotes the Hamming distance between $\boldsymbol{\mathsf{x}}_i$ and $\boldsymbol{\mathsf{x}}_j$.
\end{lemma}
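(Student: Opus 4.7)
The plan is to obtain the bound via the standard union-bound-plus-MGF recipe, then close the argument by reusing the quadratic-form MGF identity already invoked in the proof of Lemma 2. First I would write the BER as an expectation of the union bound over all ordered pairs of SCM symbols, weighted by Hamming distances between their bit labels; explicitly,
\begin{equation}
P_\mathrm{e} \;\le\; \frac{1}{L\log L}\sum_{i=1}^{L}\sum_{j=1}^{L} d(\boldsymbol{\mathsf{x}}_i,\boldsymbol{\mathsf{x}}_j)\,\mathbb{E}_{\boldsymbol{\mathsf{H}}}\bigl[P(\boldsymbol{\mathsf{x}}_i\to\boldsymbol{\mathsf{x}}_j\mid\boldsymbol{\mathsf{H}})\bigr].
\end{equation}
This is the generic nearest-neighbour union bound on an $L$-ary constellation carrying $\log L$ coded bits per channel use; the factor $1/\log L$ converts symbol errors into bit errors through the bit-label Hamming distance.

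Next I would compute the conditional pairwise error probability. Under ML detection with perfect CSI the noise $\boldsymbol{\mathsf{u}}\sim\mathcal{CN}(0,\sigma^2\boldsymbol{I})$ makes the decision statistic Gaussian, so a standard argument gives
\begin{equation}
P(\boldsymbol{\mathsf{x}}_i\to\boldsymbol{\mathsf{x}}_j\mid\boldsymbol{\mathsf{H}}) \;=\; Q\!\left(\sqrt{\frac{\|\boldsymbol{\mathsf{H}}\boldsymbol{\updelta}_{ij}\|^{2}}{2\sigma^{2}}}\right),
\end{equation}
with $\boldsymbol{\updelta}_{ij}=\boldsymbol{\mathsf{x}}_i-\boldsymbol{\mathsf{x}}_j$ in the notation used in Lemma 2. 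Then I would apply Craig's representation $Q(x)=\frac{1}{\pi}\int_{0}^{\pi/2}\exp\bigl(-\tfrac{x^{2}}{2\sin^{2}\theta}\bigr)d\theta$, which turns the pairwise error probability into a single exponential in $\|\boldsymbol{\mathsf{H}}\boldsymbol{\updelta}_{ij}\|^{2}$ and is precisely the trick that lets the channel expectation be taken inside using a moment generating function.

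At this point I would invoke Fubini to exchange $\mathbb{E}_{\boldsymbol{\mathsf{H}}}$ with the $\theta$-integral and recognise the inner expectation as the MGF of the quadratic form $\lambda=\|\boldsymbol{\mathsf{H}}\boldsymbol{\updelta}_{ij}\|^{2}=\boldsymbol{\upomega}^{\dagger}\boldsymbol{\mathsf{A}}_{ij}\boldsymbol{\upomega}$ evaluated at $t=-\tfrac{1}{4\sigma^{2}\sin^{2}\theta}$. The closed form of this MGF is exactly the one derived in (\ref{MGF_1}) inside the proof of Lemma 2,
\begin{equation}
\mathcal{M}_{\lambda}(t)=\frac{1}{\lvert\boldsymbol{I}-t\bm{\varSigma}\boldsymbol{\mathsf{A}}_{ij}\rvert}\exp\!\Bigl(t\bm{\mu}^{\dagger}\boldsymbol{\mathsf{A}}_{ij}\bigl(\boldsymbol{I}-t\bm{\varSigma}\boldsymbol{\mathsf{A}}_{ij}\bigr)^{-1}\bm{\mu}\Bigr),
\end{equation}
so substituting $t=-\tfrac{1}{4\sigma^{2}\sin^{2}\theta}$ reproduces the integrand appearing inside $\mathcal{G}(\sigma,\bm{\mu},\bm{\varSigma})$. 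Collecting the $1/\pi$ from Craig's formula together with the $1/(L\log L)$ from the union bound and pulling the Hamming-distance weights outside gives exactly the announced inequality.

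The main obstacle I expect is not any single step but the bookkeeping around the MGF: one has to be careful that the Hermitian matrix $\boldsymbol{\mathsf{A}}_{ij}=\boldsymbol{I}_{N}\otimes\boldsymbol{\updelta}_{ij}\boldsymbol{\updelta}_{ij}^{\dagger}$ and the stacked channel $\boldsymbol{\upomega}=\mathrm{vec}(\boldsymbol{\mathsf{H}})$ used here match the $\bm{\mu},\bm{\varSigma}$ of the vectorised channel exactly as in Lemma 2, and that the argument $t=-1/(4\sigma^{2}\sin^{2}\theta)$ keeps $\boldsymbol{I}-t\bm{\varSigma}\boldsymbol{\mathsf{A}}_{ij}$ positive definite for all $\theta\in(0,\pi/2]$ so that the MGF is finite and the Fubini exchange is justified. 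Once that consistency is verified, the remaining manipulations are routine.
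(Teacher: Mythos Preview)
Your proposal is correct and follows essentially the same route as the paper: union bound weighted by bit-label Hamming distances, conditional PEP expressed as a $Q$-function via the Gaussian decision statistic, Craig's integral representation, and then the quadratic-form MGF \eqref{MGF_1} from the proof of Lemma~2 evaluated at $t=-1/(4\sigma^{2}\sin^{2}\theta)$. The paper does not spell out the Fubini/positive-definiteness check you flag, but otherwise the arguments are step-for-step identical.
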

\begin{proof}
The proof follows from the following derivation
\begin{align}
{P_\mathrm{e}} &\le \frac{1}{{L\log L}}\sum\limits_{i=1}^L {\sum\limits_{j=1}^L d({\boldsymbol{\mathsf{x}}_i},{\boldsymbol{\mathsf{x}}_j})\mathbb{E}_{\boldsymbol{\mathsf{H}}}\;\mathrm{Pr}( {\boldsymbol{\mathsf{x}}_i} \to {\boldsymbol{\mathsf{x}}_j} | \boldsymbol{\mathsf{H}} )} \label{ABER_Upper_Bound}\\
&= \frac{1}{{L\log L}}\sum\limits_{i=1}^L \sum\limits_{j=1}^L d({\boldsymbol{\mathsf{x}}_i},{\boldsymbol{\mathsf{x}}_j})\mathbb{E}_{\boldsymbol{\mathsf{H}}}\; \mathrm{Pr}(\| \boldsymbol{\mathsf{r}}- \boldsymbol{\mathsf{H}}\boldsymbol{\mathsf{x}}_i \|^2 > \| \boldsymbol{\mathsf{r}}- \boldsymbol{\mathsf{H}}\boldsymbol{\mathsf{x}}_j \|^2 | \boldsymbol{\mathsf{H}}) \label{PEP_1} \\
&= \frac{1}{{L\log L}}\sum\limits_{i=1}^L \sum\limits_{j=1}^L d({\boldsymbol{\mathsf{x}}_i},{\boldsymbol{\mathsf{x}}_j})\mathbb{E}_{\boldsymbol{\mathsf{H}}}\; \mathrm{Pr}(\| \boldsymbol{\mathsf{u}} \|^2 > \| \boldsymbol{\mathsf{H}}\boldsymbol{\mathsf{x}}_i - \boldsymbol{\mathsf{H}}\boldsymbol{\mathsf{x}}_j + \boldsymbol{\mathsf{u}}\|^2 | \boldsymbol{\mathsf{H}})\\
&= \frac{1}{{L\log L}}\sum\limits_{i=1}^L \sum\limits_{j=1}^L d({\boldsymbol{\mathsf{x}}_i},{\boldsymbol{\mathsf{x}}_j})\mathbb{E}_{\boldsymbol{\mathsf{H}}}\; \mathrm{Pr}(\| \boldsymbol{\mathsf{u}} \|^2 > \| \boldsymbol{\mathsf{H}}\boldsymbol{\updelta}_{ij} \|^2 - 2\Re(\boldsymbol{\mathsf{u}}^\dagger \boldsymbol{\mathsf{H}} \boldsymbol{\updelta}_{ij}) + \| \boldsymbol{\mathsf{u}} \|^2 | \boldsymbol{\mathsf{H}})\\
&= \frac{1}{{L\log L}}\sum\limits_{i=1}^L \sum\limits_{j=1}^L d({\boldsymbol{\mathsf{x}}_i},{\boldsymbol{\mathsf{x}}_j})\mathbb{E}_{\boldsymbol{\mathsf{H}}}\; \mathrm{Pr}(2\Re(\boldsymbol{\mathsf{u}}^\dagger \boldsymbol{\mathsf{H}} \boldsymbol{\updelta}_{ij}) > \| \boldsymbol{\mathsf{H}}\boldsymbol{\updelta}_{ij} \|^2 | \boldsymbol{\mathsf{H}})\\
&= \frac{1}{{L\log L}}\sum\limits_{i=1}^L {\sum\limits_{j=1}^L d({\boldsymbol{\mathsf{x}}_i},{\boldsymbol{\mathsf{x}}_j})\mathbb{E}_{\boldsymbol{\mathsf{H}}}\; Q\bigg(\sqrt{\frac{\| \boldsymbol{\mathsf{H}}\boldsymbol{\updelta}_{ij} \|^2}{2\sigma^2}}\bigg)} \label{PEP_expand}\\
&= \frac{1}{{\pi L\log L}}\sum\limits_{i=1}^L {\sum\limits_{j=1}^L d({\boldsymbol{\mathsf{x}}_i},{\boldsymbol{\mathsf{x}}_j})\mathbb{E}_{\boldsymbol{\mathsf{H}}} \int_0^{\frac{\pi}{2}} \exp\bigg(-\frac{\| \boldsymbol{\mathsf{H}}\boldsymbol{\updelta}_{ij} \|^2}{4\sigma^2\sin^2\theta}\bigg)d\theta } \label{ABER_1} \\
&= \frac{1}{{\pi L\log L}} \mathcal{G}(\sigma, \bm{\mu}, \bm{\varSigma}) \label{ABER_4}
\end{align}
where $\mathrm{Pr}( {\boldsymbol{\mathsf{x}}_i} \to {\boldsymbol{\mathsf{x}}_j} | \boldsymbol{\mathsf{H}} )$ represents the conditional pairwise error probability (PEP) with respect to the channel $\boldsymbol{\mathsf{H}}$; $Q(\cdot)$ denotes the $Q$ function, i.e., $Q(t) = \frac{1}{\sqrt{2\pi}}\int_{t}^{\infty}e^{-\frac{x^2}{2}}dx$.

The inequality \eqref{ABER_Upper_Bound} follows from the union bound technique \cite{Jeganathan_J2009}. The equations \eqref{PEP_1}-\eqref{PEP_expand} follow from mathematical manipulations and the fact that the noise $\boldsymbol{\mathsf{u}}$ follows CSCG distribution. Next, we have \eqref{ABER_1} by using the Craig's representation of the $Q$ function as
\begin{equation}\label{conditioned_PEP}
Q\bigg(\sqrt{\frac{\| \boldsymbol{\mathsf{H}}\boldsymbol{\updelta}_{ij} \|^2}{2\sigma^2}}\bigg) = \frac{1}{\pi} \int_0^{\frac{\pi}{2}} \exp\bigg(-\frac{\| \boldsymbol{\mathsf{H}}\boldsymbol{\updelta}_{ij} \|^2}{4\sigma^2\sin^2\theta}\bigg)d\theta.
\end{equation}
The equation \eqref{ABER_4} follows from \eqref{MGF_2} and \eqref{MGF_1}, and then we conclude the proof.
\end{proof}

Based on Lemma $2$, the following remark illustrates the relation between the minimum Hamming distance and the BER upper bound.
\begin{remark}
The BER upper bound of the proposed SCM decreases as the minimum Hamming distance increases.
\end{remark}

The remark can be verified by manipulating the upper bound. We consider the fading channel in which the BER of the SCM is lower than that in other fading channels, i.e., the Rayleigh fading channel. Specifically, the elements of the Rayleigh fading channel are i.i.d. samples following the CSCG distribution. Hence $\bm{\mu} = \boldsymbol{0}$ and $\bm{\varSigma} = \boldsymbol{I}$, and the BER upper bound is derived as
\begin{align}
{P_\mathrm{e}} &\le \frac{1}{{\pi L\log L}}\sum\limits_{i=1}^L {\sum\limits_{j=1}^L d({\boldsymbol{\mathsf{x}}_i},{\boldsymbol{\mathsf{x}}_j}) \int_0^{\frac{\pi}{2}} \bigg(1 + \frac{\|\boldsymbol{\updelta}_{ij}\|^2}{8\sigma^4\sin^2\theta}\bigg)^{-N} d\theta}\\
&\le \frac{(L-1)d_{\mathrm{sum}}}{{2\pi \log L}}{ \int_0^{\frac{\pi}{2}} \bigg(1 + \frac{1}{8\sigma^4\sin^2\theta}\min\limits_{i\neq j}\|\boldsymbol{\updelta}_{ij}\|^2\bigg)^{-N} d\theta} \label{BER_upper_1}\\
&\le \frac{(L-1)d_{\mathrm{sum}}}{{2\pi \log L}}{ \bigg(1 + \frac{1}{8\sigma^4}\min\limits_{i\neq j}\|\boldsymbol{\updelta}_{ij}\|^2\bigg)^{-N} }\label{BER_1}
\end{align}
where $d_{\mathrm{sum}} = \sum\limits_{i=1}^L \sum\limits_{j=1}^L d({\boldsymbol{\mathsf{x}}_i},{\boldsymbol{\mathsf{x}}_j})$. We have \eqref{BER_upper_1} by replacing $\|\boldsymbol{\updelta}_{ij}\|^2$ with $\min\limits_{i\neq j}\|\boldsymbol{\updelta}_{ij}\|^2$ and \eqref{BER_1} follows from the fact that the integrand is monotonically increasing with $\theta \in (0, \frac{\pi}{2})$. From \eqref{BER_1}, it can be observed that the BER level is lowered by increasing $\min\limits_{i\neq j}\|\boldsymbol{\updelta}_{ij}\|^2$, which corresponds to increasing the minimum Hamming distance in the codebook.

\begin{remark}
For the conventional SM scheme, it is readily to prove that the minimum Hamming distance of its codebook is $2$ since a single antenna is active in each transmission, which means that each codeword has a single non-zero element. On the other hand, for the conventional GSM schemes with multiple active antennas, the minimum Hamming distance of its codebook is also $2$, except for the special case where the transmitter has $4$ antennas and activates $2$ in each transmission. In this specific case, the minimum Hamming distance of the GSM is $4$.
\end{remark}

\subsection{Expressions for Spatially Correlated Rician and Nakagami-$m$ Channels}
Note that both the lower bound of the system capacity and the upper bound of the BER requires the evaluation of $\bm{\mu}$ and $\bm{\varSigma}$. Thus, the exact expressions are provided in the following. We first formulate the spatially correlated channel by the Kronecker model, which is expressed as \cite{Forenza_2004}
\begin{equation}
\tilde{\boldsymbol{\mathsf{H}}} = \boldsymbol{F}^{\frac{1}{2}} \boldsymbol{\mathsf{H}} \boldsymbol{G}^{\frac{1}{2}}
\end{equation}
where $\boldsymbol{F} \in \mathbb{C}^{N \times N}$ and $\boldsymbol{G} \in \mathbb{C}^{M \times M}$ denote the receiver and transmitter spatial correlation matrices, respectively, expressed as
\begin{align}
[\boldsymbol{F}]_{l,k} &= \rho^{\left| l - k \right|}, \quad l, k \in \mathcal{I}_{N} \\
[\boldsymbol{G}]_{l,k} &= \tau^{\left| l - k \right|}, \quad l, k \in \mathcal{I}_{M}
\end{align}
with $0 \le {\rho},{\tau} \le 1$. %For the Rayleigh fading, the entries of $\boldsymbol{W} \in \mathbb{C}^{N_\textrm{r} \times N_\textrm{r}}$ follow independent and identically distributed complex circularly symmetric Gaussian distribution $\mathcal{CN}(0,1)$. Then we have
%\begin{align}\label{mean_for_Rayleigh}
%&\boldsymbol{u_v} = \boldsymbol{0}\\
%&\boldsymbol{L_v} = \boldsymbol{R_s}
%\end{align}
%and
%From \eqref{mean_for_Rayleigh}-\eqref{R_s}, the ABER in \eqref{ABER_derived} can be simplified as
%\begin{equation}
%{P_\textrm{b}} \le \frac{1}{{\pi B\log B}}\sum\limits_{\imath=1}^B {\sum\limits_{\jmath=1}^B D({\boldsymbol{\mathsf{x}}_\imath},{\boldsymbol{\mathsf{x}}_\jmath}) \int_0^{\pi/2}\frac{\exp(-\frac{\gamma}{2\sin^2\theta})}{|\boldsymbol{I}+\frac{\gamma}{2\sin^2\theta}\boldsymbol{R_sA}|}d\theta }.
%\end{equation}

\noindent\underline{\em Rician Channel:} The channel matrix $\boldsymbol{\mathsf{D}} \in \mathbb{C}^{N \times M}$ is modeled as the sum of the line-of-sight (LoS) and the non-LoS (NLoS) components, i.e.
\begin{equation}
\boldsymbol{\mathsf{D}} = \sqrt {\frac{K}{{1 + K}}} \bm{\Lambda} + \sqrt {\frac{1}{{1 + K}}} \boldsymbol{\mathsf{H}}
\end{equation}
where $\bm{\Lambda}$ models the LoS component and is composed of all ones elements; $\boldsymbol{\mathsf{H}}$ models the NLoS fading whose entries follow i.i.d. CSCG distribution; $K$ is the Rician factor, which characterizes the ratio of the LoS and the NLoS components. Then we have
\begin{align}
\label{Rician_1}&\bm{\mu}^{\mathrm{Ri}} = \sqrt{\frac{K}{K+1}}\boldsymbol{B}^{\frac{1}{2}} \times \textrm{vec}(\bm{\Lambda})\\
\label{Rician_2}&\bm{\varSigma}^{\mathrm{Ri}} = \frac{\boldsymbol{B}}{K+1}
\end{align}
where the matrix $\boldsymbol{B}$ is defined as
\begin{equation}\label{R_s}
\boldsymbol{B}\triangleq \boldsymbol{F}\otimes\boldsymbol{G}.
\end{equation}
By substituting \eqref{Rician_1}, \eqref{Rician_2} and \eqref{R_s} into \eqref{f_function}, we have the capacity lower bound in the spatially correlated Rician channel as
\begin{equation}
C \ge 2\log L - \mathcal{F}(\sigma, \bm{\mu}^{\mathrm{Ri}}, \bm{\varSigma}^{\mathrm{Ri}})
\end{equation}
By substituting \eqref{Rician_1}, \eqref{Rician_2} and \eqref{R_s} into \eqref{ABER_2}, we have the BER upper bound in the spatially correlated Rician fading channel, i.e.
\begin{equation}
P_\mathrm{e} \le \frac{1}{\pi L\log L} \mathcal{G}(\sigma, \bm{\mu}^{\mathrm{Ri}}, \bm{\varSigma}^{\mathrm{Ri}}).
\end{equation}
\noindent\underline{\em Nakagami-$m$ Channel:} We have
\begin{align}
\label{Nakagami_1}\bm{\mu}^{\mathrm{Na}} &= \frac{\Gamma(\frac{m+1}{2})e^{\imath\frac{\pi}{4}}}{\Gamma(\frac{m}{2})\sqrt{\frac{m}{2}}} \times \textrm{vec}(\bm{\Lambda})\\
\label{Nakagami_2}\bm{\varSigma}^{\mathrm{Na}} &= \bigg(1-\frac{2}{m}\bigg(\frac{\Gamma(\frac{m+1}{2})}{\Gamma(\frac{m}{2})}\bigg)^2\bigg) \times \boldsymbol{B}
\end{align}
where we have the notation $\imath = \sqrt{-1}$. By substituting \eqref{Nakagami_1}, \eqref{Nakagami_2} and \eqref{R_s} into \eqref{f_function}, we have the capacity lower bound in the Nakagami-$m$ fading channel as
\begin{equation}
C \ge 2\log L - \mathcal{F}(\sigma, \bm{\mu}^{\mathrm{Na}}, \bm{\varSigma}^{\mathrm{Na}}).
\end{equation}
By substituting \eqref{Nakagami_1}, \eqref{Nakagami_2} and \eqref{R_s} into \eqref{ABER_2}, we have the BER upper bound in the Nakagami-$m$ fading channel as
\begin{equation}
P_\mathrm{e} \le \frac{1}{\pi L\log L} \mathcal{G}(\sigma, \bm{\mu}^{\mathrm{Na}}, \bm{\varSigma}^{\mathrm{Na}}).
\end{equation}
\subsection{BER with Imperfect CSI}
The CSI at the receiver is generally obtained using a pilot sequence and it is demanding to have the accurate CSI. Therefore, we consider the channel model with inaccurate CSI, which is formulated as
\begin{equation}
\boldsymbol{\mathsf{H}} = \hat{\boldsymbol{\mathsf{H}}} + \bm{\upepsilon}
\end{equation}
where $\hat{\boldsymbol{\mathsf{H}}} \in \mathbb{C}^{N \times M}$ denotes the estimation of the true channel; $\bm{\upepsilon} \in \mathbb{C}^{N \times M}$ represents the channel estimation error, with i.i.d. samples following CSCG distribution, i.e., $\bm{\upepsilon} \sim \mathcal{CN}(\boldsymbol{0}, \gamma^2\boldsymbol{I})$. Then, the conditional PEP with imperfect CSI can be expressed as
% $\tilde{\boldsymbol{n}} = \boldsymbol{e}\boldsymbol{\mathsf{x}}_\imath+\boldsymbol{n} \sim \mathcal{CN}(\boldsymbol{0},\sigma_{\tilde{\boldsymbol{n}}}^2 \boldsymbol{I}_{N_\textrm{r}})$;
\begin{align}
\mathrm{Pr}( {\boldsymbol{\mathsf{x}}_i} \to {\boldsymbol{\mathsf{x}}_j} | \hat{\boldsymbol{\mathsf{H}}} ) &= \mathrm{Pr}(\| \boldsymbol{\mathsf{r}}- \hat{\boldsymbol{\mathsf{H}}}\boldsymbol{\mathsf{x}}_i \|^2 > \| \boldsymbol{\mathsf{r}}- \hat{\boldsymbol{\mathsf{H}}}\boldsymbol{\mathsf{x}}_j \|^2 | \hat{\boldsymbol{\mathsf{H}}}) \label{PEP_3}\\
&= \mathrm{Pr}(\| (\boldsymbol{\mathsf{H}}-\hat{\boldsymbol{\mathsf{H}}})\boldsymbol{\mathsf{x}}_i + \boldsymbol{\mathsf{u}} \|^2 > \| \boldsymbol{\mathsf{H}}\boldsymbol{\mathsf{x}}_i - \hat{\boldsymbol{\mathsf{H}}}\boldsymbol{\mathsf{x}}_j + \boldsymbol{\mathsf{u}}\|^2 | \hat{\boldsymbol{\mathsf{H}}})\\
&= \mathrm{Pr}(\| \bm{\upepsilon}\boldsymbol{\mathsf{x}}_i + \boldsymbol{\mathsf{u}} \|^2 > \| \hat{\boldsymbol{\mathsf{H}}}(\boldsymbol{\mathsf{x}}_i-\boldsymbol{\mathsf{x}}_j) + \bm{\upepsilon}\boldsymbol{\mathsf{x}}_i + \boldsymbol{\mathsf{u}} \|^2 | \hat{\boldsymbol{\mathsf{H}}})\\
&= \mathrm{Pr}(\| \boldsymbol{\mathsf{v}} \|^2 > \| \hat{\boldsymbol{\mathsf{H}}}\boldsymbol{\updelta}_{ij} \|^2 - 2\Re(\boldsymbol{\mathsf{v}}^\dagger \hat{\boldsymbol{\mathsf{H}}} \boldsymbol{\updelta}_{ij}) + \| \boldsymbol{\mathsf{v}} \|^2 | \hat{\boldsymbol{\mathsf{H}}})\\
&= \mathrm{Pr}(2\Re(\boldsymbol{\mathsf{v}}^\dagger \hat{\boldsymbol{\mathsf{H}}}\boldsymbol{\updelta}_{ij}) > \| \hat{\boldsymbol{\mathsf{H}}}\boldsymbol{\updelta}_{ij} \|^2 | \hat{\boldsymbol{\mathsf{H}}})\\
&= Q\bigg(\sqrt{\frac{\| \hat{\boldsymbol{\mathsf{H}}}\boldsymbol{\updelta}_{ij} \|^2}{2\eta^2}}\bigg) \label{PEP_4}
\end{align}
where we have the shorthand $\boldsymbol{\mathsf{v}} = \bm{\upepsilon}\boldsymbol{\mathsf{x}}_i+\boldsymbol{\mathsf{u}}$ and $\eta^2 = \gamma^2 \|\boldsymbol{\mathsf{x}}_i\|^2 + \sigma^2$. The equations \eqref{PEP_3}-\eqref{PEP_4} follows from mathematical manipulations and the fact that the equivalent noise $\boldsymbol{\mathsf{v}}$ follows CSCG distribution. Let $\bar{\bm{\varSigma}} = \bm{\varSigma} + \gamma^2 \boldsymbol{I}$, the general BER upper bound in the scenario of imperfect channel estimation is expressed as
\begin{equation}\label{ABER_3}
P_\mathrm{e} \le \frac{1}{\pi L\log L} \mathcal{G}(\eta, \bm{\mu}, \bm{\bar{\varSigma}}).
\end{equation}
\begin{remark}
The imperfect channel estimation incurs an error floor of the BER performance, which is due to the introduction of an additional variance term $\gamma^2$ associated with the channel estimation error. At high SNR, the noise variance $\sigma^2$ approaches to zero while $\eta^2$ approaches to $\gamma^2 \|\boldsymbol{\mathsf{x}}_i\|^2$, which is always larger than zero and thus keeps the BER at a fixed level. In addition, it is also readily to know that the BER floor increases as the error variance increases.
\end{remark}

\section{Numerical Results}\label{Simulations}
In this section, we provide the numerical results on the capacity and the reliability of the proposed SCM scheme in various channel scenarios.
\begin{figure*}[!t]
\begin{minipage}[t]{0.49\linewidth}
  \centering
  \includegraphics[width=3in]{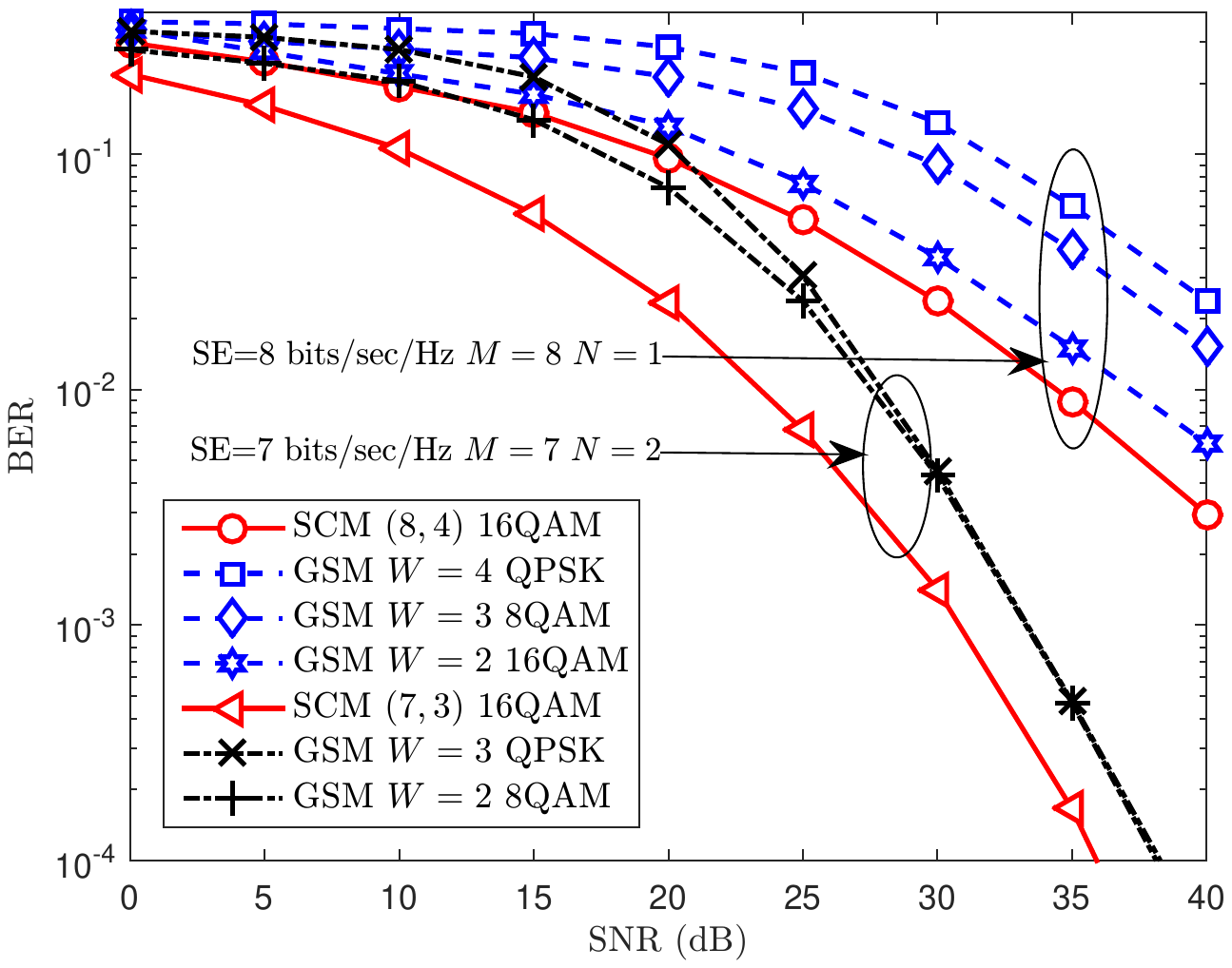}\\
  \caption{The BER performance of the SCM and the GSM in the spatially correlated Rician channel, where the Rician $K$ factor is $7$; the spatial correlation coefficients at the transmitter and the receiver are $\rho = \tau = 0.5$. The receiver has perfect CSI.}
  \label{Compr_BER_versus_SNR}
  \end{minipage}%
  \hfill
\begin{minipage}[t]{0.49\linewidth}
  \centering
  \includegraphics[width=3in]{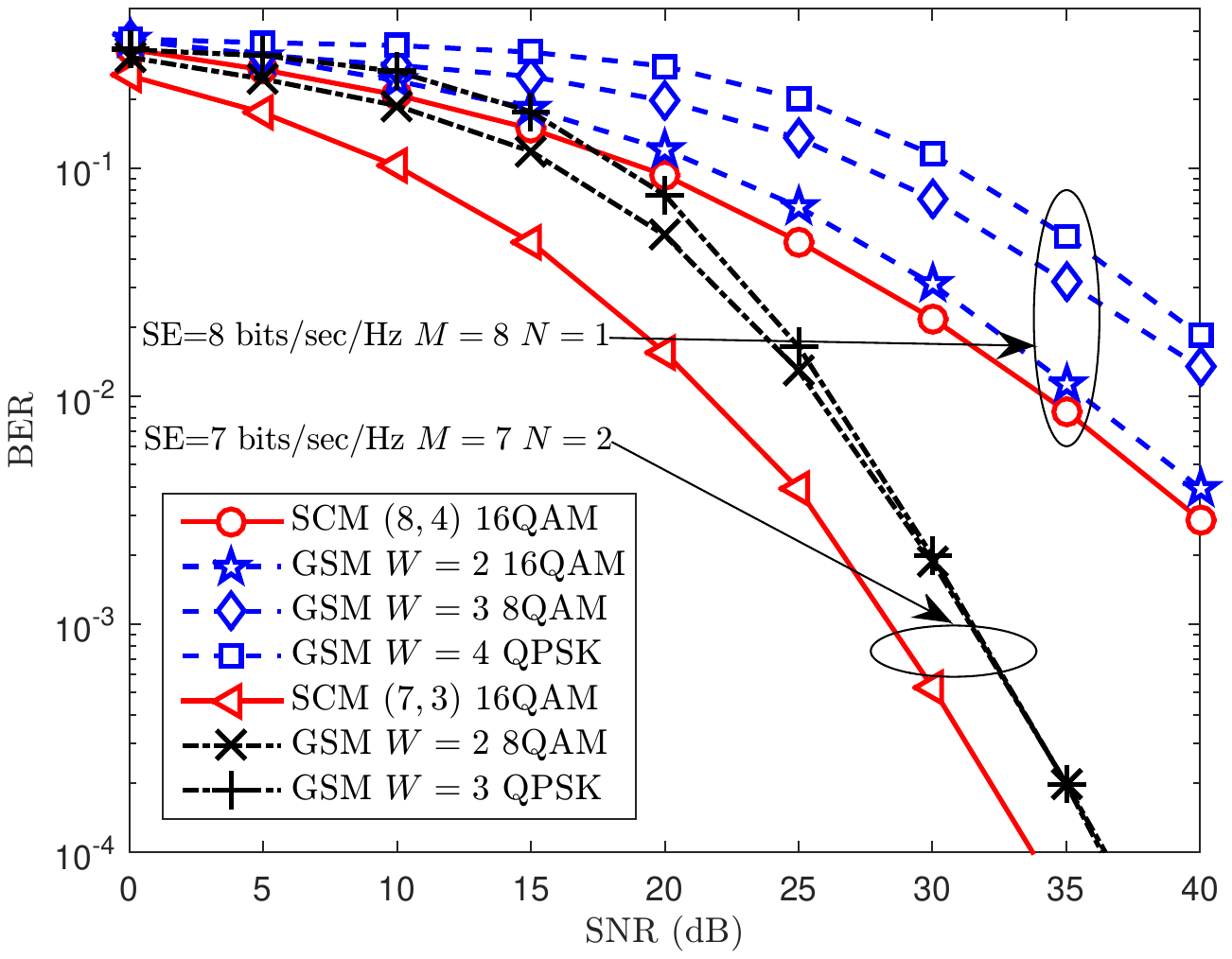}\\
  \caption{The BER performance of the SCM and the GSM in the spatially correlated Nakagami-$m$ channel, where the $m$ factor is $3$; the spatial correlation coefficients at the transmitter and the receiver are $\rho = \tau = 0.2$. The receiver has perfect CSI.}
  \label{Compr_BER_versus_SNR_Nakagami}
  \end{minipage}
\end{figure*}

%\begin{figure}[!t]
%  \centering
%  \includegraphics[width=3in]{./figs/Compr_BER_versus_SNR}\\
%  \caption{The BER performance of the SCM and the GSM in the spatially correlated Rician channel, where the Rician $K$ factor is $7$; the spatial correlation coefficients at the transmitter and the receiver are $\rho = \tau = 0.5$. The perfect CSI is assumed at the receiver.}
%  \label{Compr_BER_versus_SNR}
%\end{figure}

%\begin{figure}[!t]
%  \centering
%  \includegraphics[width=3in]{./figs/Compr_BER_versus_SNR_Nakagami}\\
%  \caption{The BER performance of the SCM and the GSM in the spatially correlated Nakagami-$m$ channel, where the $m$ factor is $3$; the spatial correlation coefficients at the transmitter and the receiver are $\rho = \tau = 0.2$. The perfect CSI is assumed at the receiver.}
%  \label{Compr_BER_versus_SNR_Nakagami}
%\end{figure}

%\begin{figure}[!t]
%  \centering
%  \includegraphics[width=3.1in]{}\\
%  \caption{The BER performance of the SCM and the GSM in the spatially correlated Rician channel, where the Rician $K$ factor is $3$; the numbers of the transmit antennas and the receive antennas are $M = 7$ and $N = 2$, respectively; the spatial correlation coefficients at the transmitter and the receiver are $\rho = \tau = 0.3$. The system targets at a spectral efficiency of $7$ bits/sec/Hz and the receiver is subject to channel uncertainties.}
%  \label{Compr_BER_versus_SNR_CSE}
%\end{figure}
\noindent\textit{\textbf{Observation 1}: For an identical antenna configuration, the proposed SCM outperforms the conventional GSM in the reliability (cf. Figs. \ref{Compr_BER_versus_SNR} and \ref{Compr_BER_versus_SNR_Nakagami}).}

In Figs. $\ref{Compr_BER_versus_SNR}$ and $\ref{Compr_BER_versus_SNR_Nakagami}$, the BER performance of the proposed SCM scheme is compared with the conventional GSM in the spatially correlated Rician and Nakagami-$m$ channels. Targeting at a spectral efficiency of $7$ bits/sec/Hz, we compare the SCM $(7,4)$ and the SCM $(7,3)$ with the conventional GSM having $W = 2, 3$ active antennas. For a spectral efficiency of $8$ bits/sec/Hz, we compare the SCM $(8,4)$ with the conventional GSM having $W = 2, 3, 4$ active antennas. The numerical results show that the proposed SCM remarkably outperforms the conventional GSM. Specifically, compared with the conventional GSM scheme in Fig. \ref{Compr_BER_versus_SNR}, the SCM $(7,3)$ achieves $2.5$ dB gain at the BER of $0.0001$ for the spectral efficiency of $7$ bits/sec/Hz and the SCM $(7,4)$ achieves $4$ dB gain at the BER of $0.01$ for the spectral efficiency of $8$ bits/sec/Hz. Similar trend is observed in Fig. \ref{Compr_BER_versus_SNR_Nakagami}, where the channel follows Nakagami-$m$ fading. The reliability enhancement is due to the increasing of the minimum Hamming distance, which has been proved in Section \ref{Performance_Analysis}. In particular, the minimum Hamming distances of the SCM $(7,3)$, the SCM $(7,4)$ and the SCM $(8,4)$ are $4$, $3$ and $4$, respectively, while that of the conventional GSM schemes are all fixed at $2$. In contrast to the conventional coding strategies which suffer from redundancy and efficiency, the proposed coding-over-antenna approach improves the reliability without sacrificing the communication efficiency.

\begin{figure*}[!t]
\begin{minipage}[t]{0.49\linewidth}
  \centering
  \includegraphics[width=2.9in]{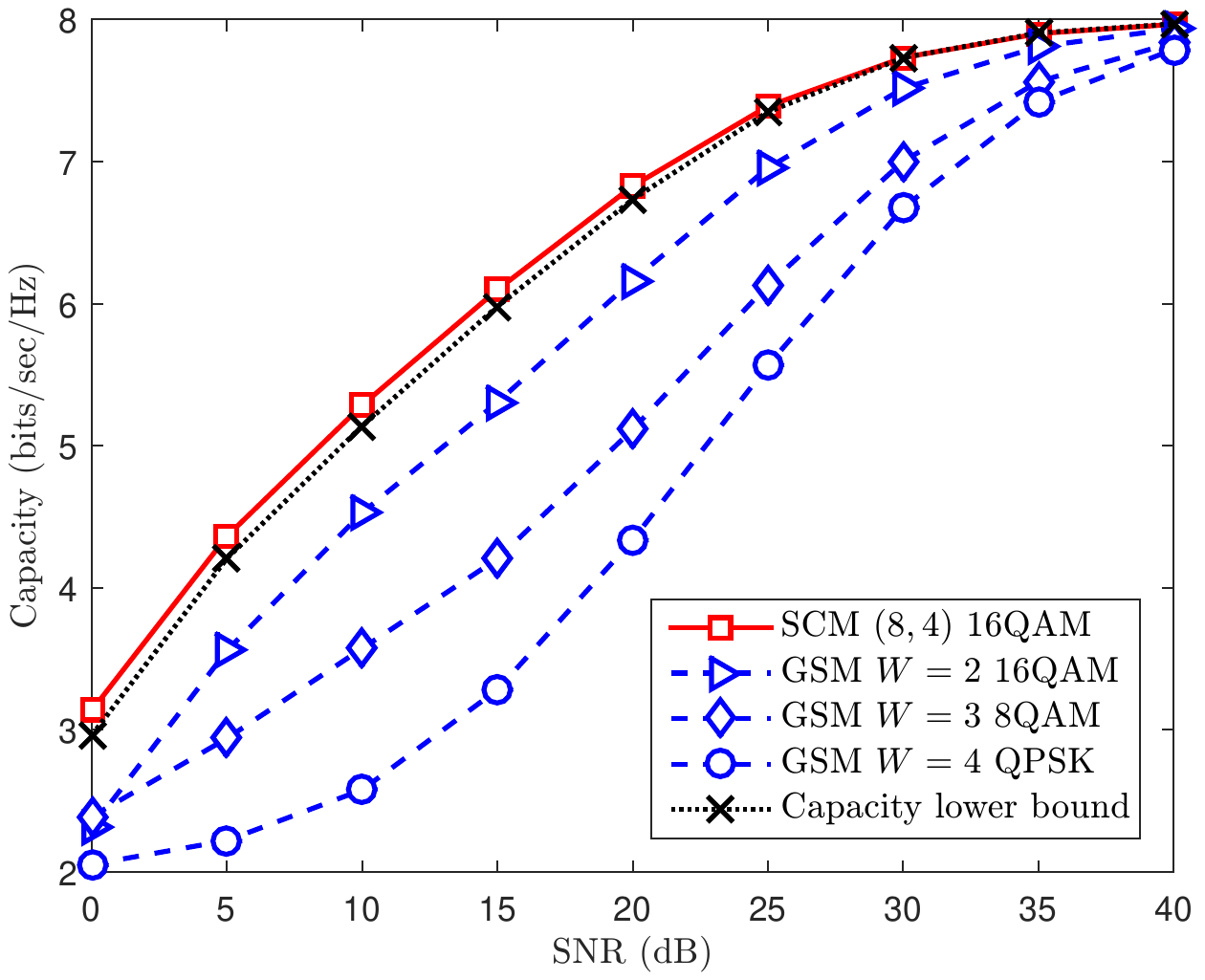}\\
  \caption{The capacity of the SCM $(8,4)$ and the GSM in the spatially correlated Rician channel, where the Rician $K$ factor is $7$; the numbers of the transmit and receive antennas are $M = 8$ and $N = 1$, respectively; the transmitter spatial correlation coefficient is $\tau = 0.5$.}
  \label{Compr_DCMC_versus_SNR}
  \end{minipage}%
  \hfill
\begin{minipage}[t]{0.49\linewidth}
  \centering
  \includegraphics[width=2.9in]{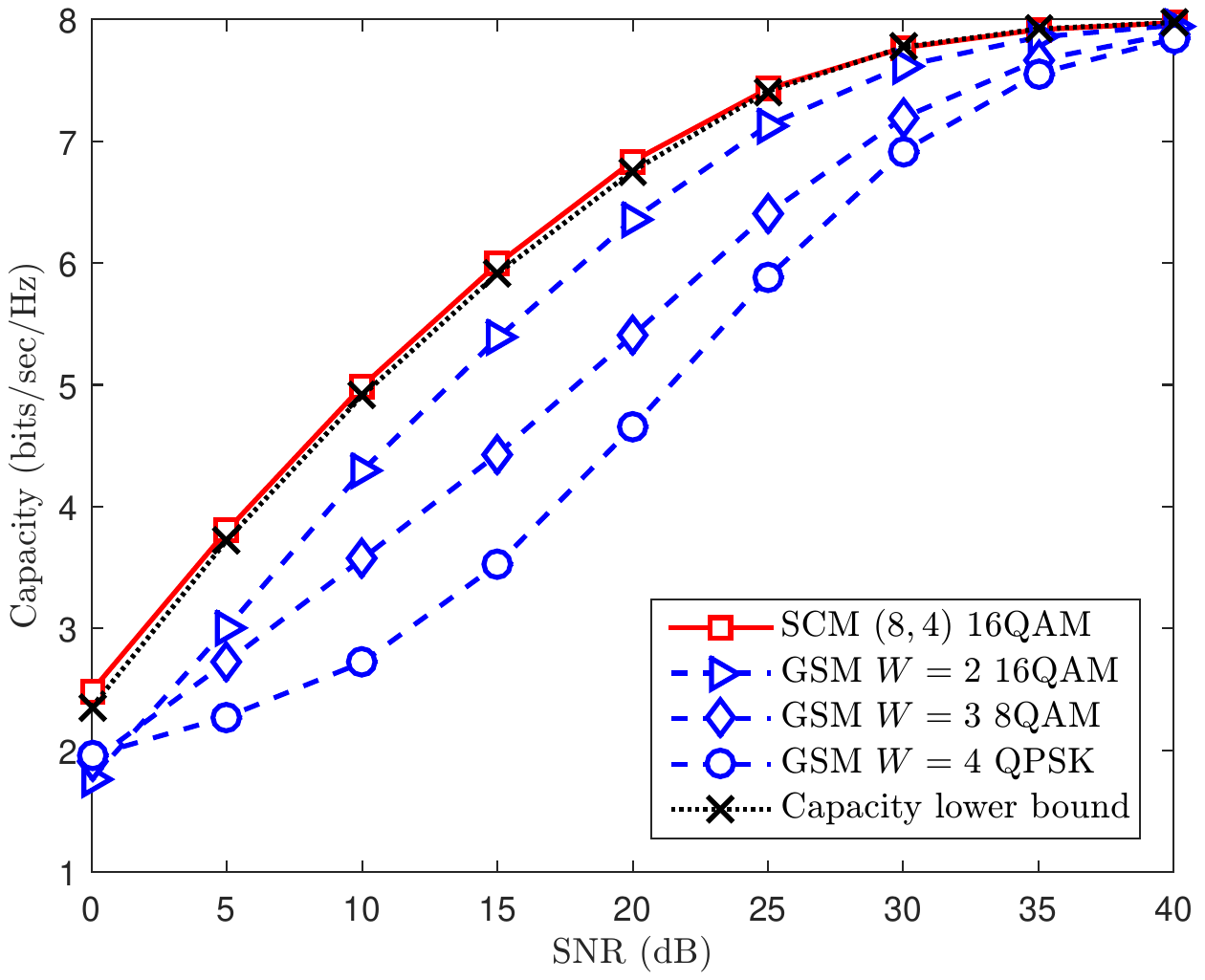}\\
  \caption{The capacity of the SCM $(8,4)$ and the GSM in the spatially correlated Nakagami-$m$ channel, where the $m$ factor is $3$; the numbers of the transmit and receive antennas are $M = 8$ and $N = 1$, respectively; the transmitter spatial correlation coefficient is $\tau = 0.2$.}
  \label{Compr_DCMC_versus_SNR_Nakagami}
  \end{minipage}
\end{figure*}

%\begin{figure}[!t]
%  \centering
%  \includegraphics[width=3in]{./figs/Compr_DCMC_versus_SNR}\\
%  \caption{The capacity of the SCM $(8,4)$ and the GSM in the spatially correlated Rician channel, where the Rician $K$ factor is $7$; the numbers of the transmit and receive antennas are $M = 8$ and $N = 1$, respectively; the transmitter spatial correlation coefficient is $\tau = 0.5$.}
%  \label{Compr_DCMC_versus_SNR}
%\end{figure}

%\begin{figure}[!t]
%  \centering
%  \includegraphics[width=3in]{./figs/Compr_DCMC_versus_SNR_Nakagami}\\
%  \caption{The capacity of the SCM $(8,4)$ and the GSM in the spatially correlated Nakagami-$m$ channel, where the $m$ factor is $3$; the numbers of the transmit and receive antennas are $M = 8$ and $N = 1$, respectively; the transmitter spatial correlation coefficient is $\tau = 0.2$.}
%  \label{Compr_DCMC_versus_SNR_Nakagami}
%\end{figure}

\noindent\textit{\textbf{Observation 2}: For an identical antenna configuration, the proposed SCM outperforms the conventional GSM in the capacity (cf. Figs. \ref{Compr_DCMC_versus_SNR} and \ref{Compr_DCMC_versus_SNR_Nakagami}).}

In Figs. $\ref{Compr_DCMC_versus_SNR}$ and $\ref{Compr_DCMC_versus_SNR_Nakagami}$, we evaluate the capacities of the SCM $(8,4)$ and the GSM in the spatially correlated Rician and Nakagami-$m$ channels, in which the channel scenarios and antenna configurations are identical to that in Figs. $\ref{Compr_BER_versus_SNR}$ and $\ref{Compr_BER_versus_SNR_Nakagami}$. From Figs. $\ref{Compr_DCMC_versus_SNR}$ and $\ref{Compr_DCMC_versus_SNR_Nakagami}$, we observe that: First, both the SCM $(8,4)$ and the conventional GSM schemes with $W = 2, 3, 4$ achieve the same ultimate capacity of $8$ bits/sec/Hz at sufficiently high SNR; Second, the SCM $(8,4)$ exhibits a higher capacity than the conventional GSM schemes before reaching the $8$ bits/sec/Hz value. The capacity improvement is attributed to the increasing of the minimum Hamming distance, which has been proved in Section \ref{Performance_Analysis}. Therefore, we conclude that both higher capacity and better reliability can be achieved by the proposed SCM.

\begin{figure*}[!t]
\begin{minipage}[t]{0.49\linewidth}
  \centering
  \includegraphics[width=3in]{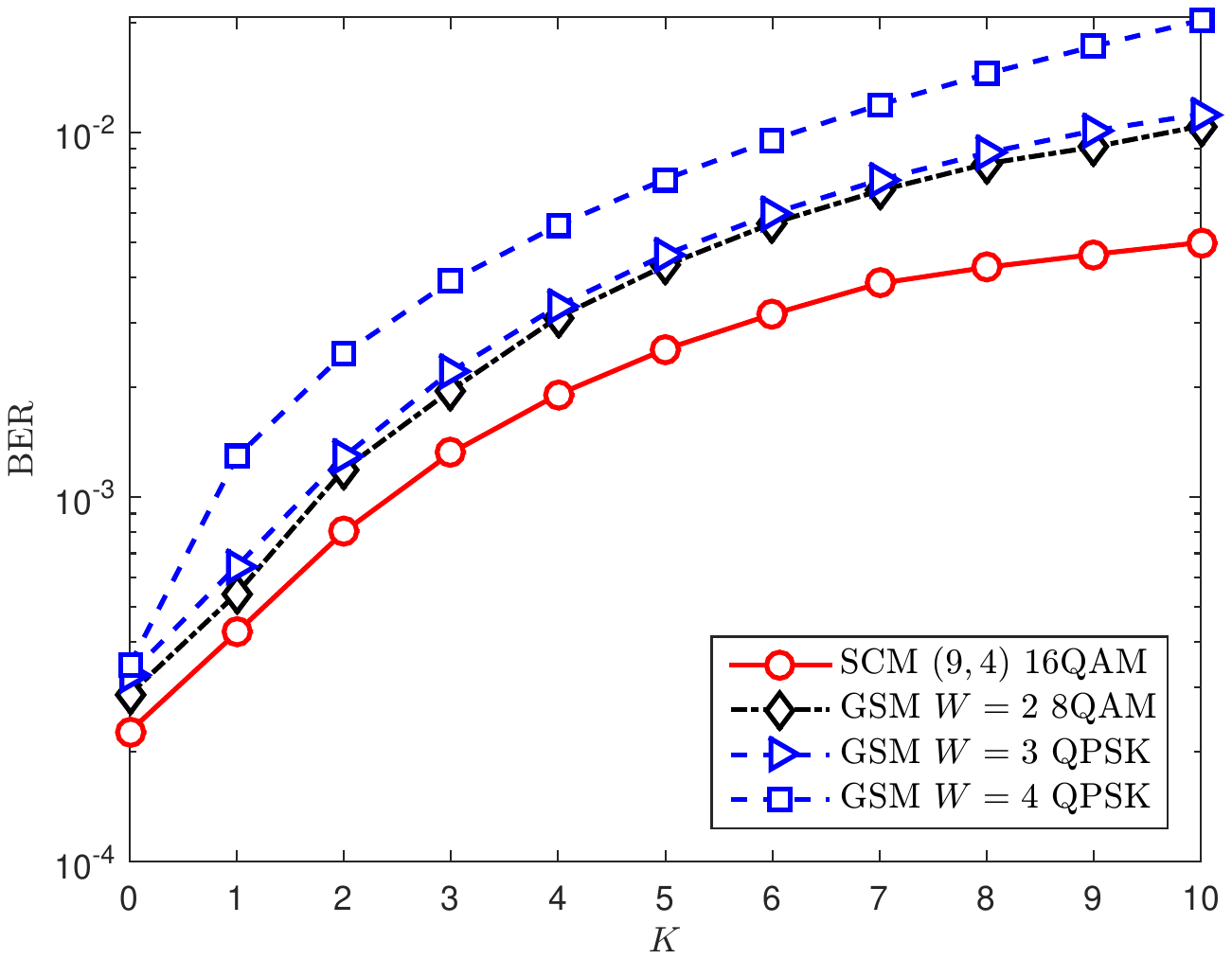}\\
  \caption{The BER performance of the SCM $(9,4)$ and the GSM in the spatially correlated Rician channel, where the SNR is $30$ dB; the numbers of the transmit antennas and the receive antennas are $M = 9$ and $N = 2$, respectively; the spatial correlation coefficients at the transmitter and the receiver are $\rho = \tau = 0.5$. The system targets a spectral efficiency of $8$ bits/sec/Hz and the perfect CSI is assumed at the receiver.}
  \label{Compr_BER_versus_K}
  \end{minipage}%
  \hfill
\begin{minipage}[t]{0.49\linewidth}
  \centering
  \includegraphics[width=3in]{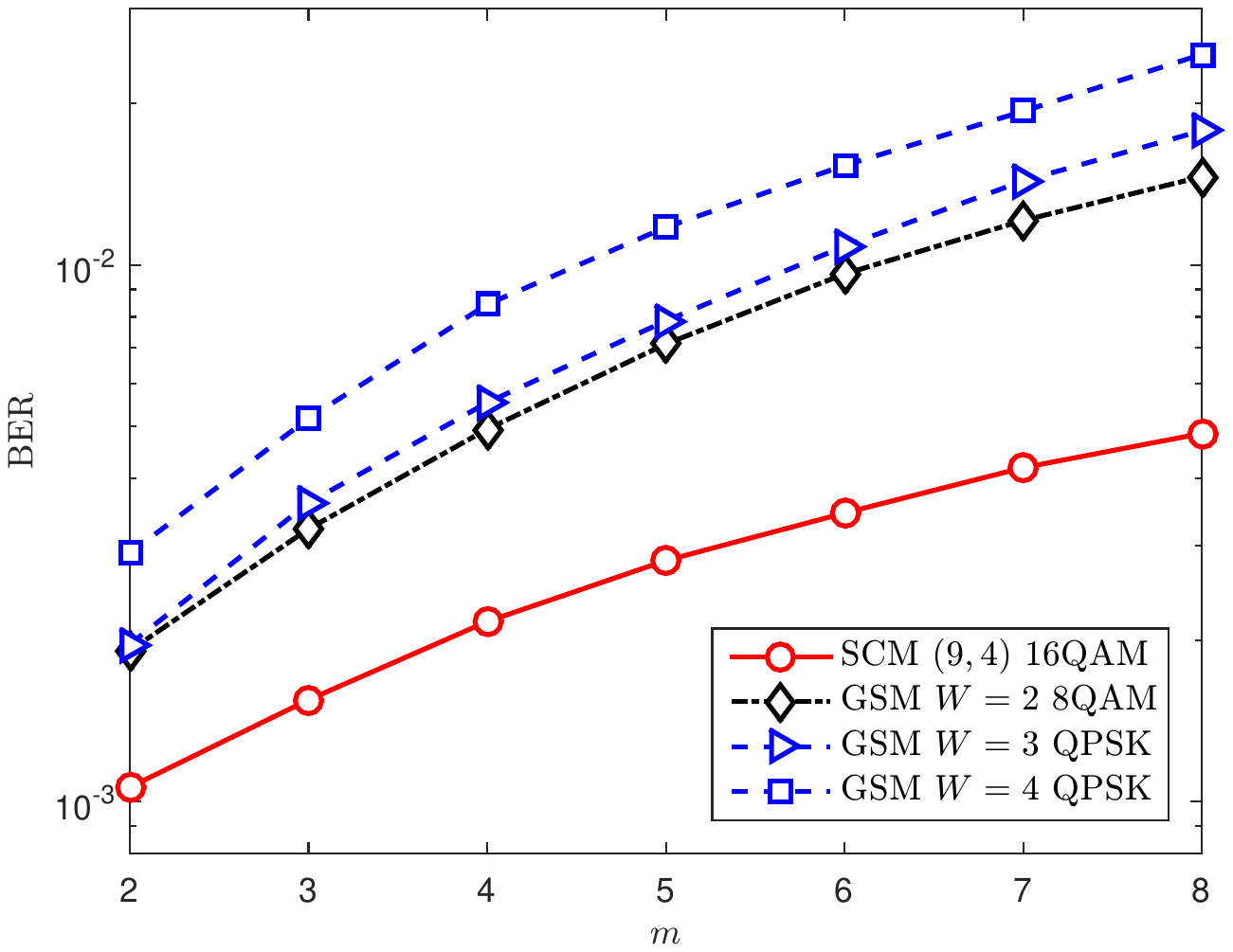}\\
  \caption{The BER performance of the SCM $(9,4)$ and the GSM in the spatially correlated Nakagami-$m$ channel, where the SNR is $30$ dB; the numbers of the transmit antennas and the receive antennas are $M = 9$ and $N = 2$, respectively; the spatial correlation coefficients at the transmitter and the receiver are $\rho = \tau = 0.2$. The system targets a spectral efficiency of $8$ bits/sec/Hz and the perfect CSI is assumed at the receiver.}
  \label{Compr_BER_versus_m}
  \end{minipage}
\end{figure*}

%\begin{figure}[!t]
%  \centering
%  \includegraphics[width=3in]{./figs/Compr_BER_versus_K}\\
%  \caption{The BER performance of the SCM $(9,4)$ and the GSM in the spatially correlated Rician channel, where the SNR is $30$ dB; the numbers of the transmit antennas and the receive antennas are $M = 9$ and $N = 2$, respectively; the spatial correlation coefficients at the transmitter and the receiver are $\rho = \tau = 0.5$. The system targets a spectral efficiency of $8$ bits/sec/Hz and the perfect CSI is assumed at the receiver.}
%  \label{Compr_BER_versus_K}
%\end{figure}

%\begin{figure}[!t]
%  \centering
%  \includegraphics[width=3in]{./figs/Compr_BER_versus_m}\\
%  \caption{The BER performance of the SCM $(9,4)$ and the GSM in the spatially correlated Nakagami-$m$ channel, where the SNR is $30$ dB; the numbers of the transmit antennas and the receive antennas are $M = 9$ and $N = 2$, respectively; the spatial correlation coefficients at the transmitter and the receiver are $\rho = \tau = 0.2$. The system targets a spectral efficiency of $8$ bits/sec/Hz and the perfect CSI is assumed at the receiver.}
%  \label{Compr_BER_versus_m}
%\end{figure}

\begin{figure*}[!t]
\begin{minipage}[t]{0.49\linewidth}
  \centering
  \includegraphics[width=3in]{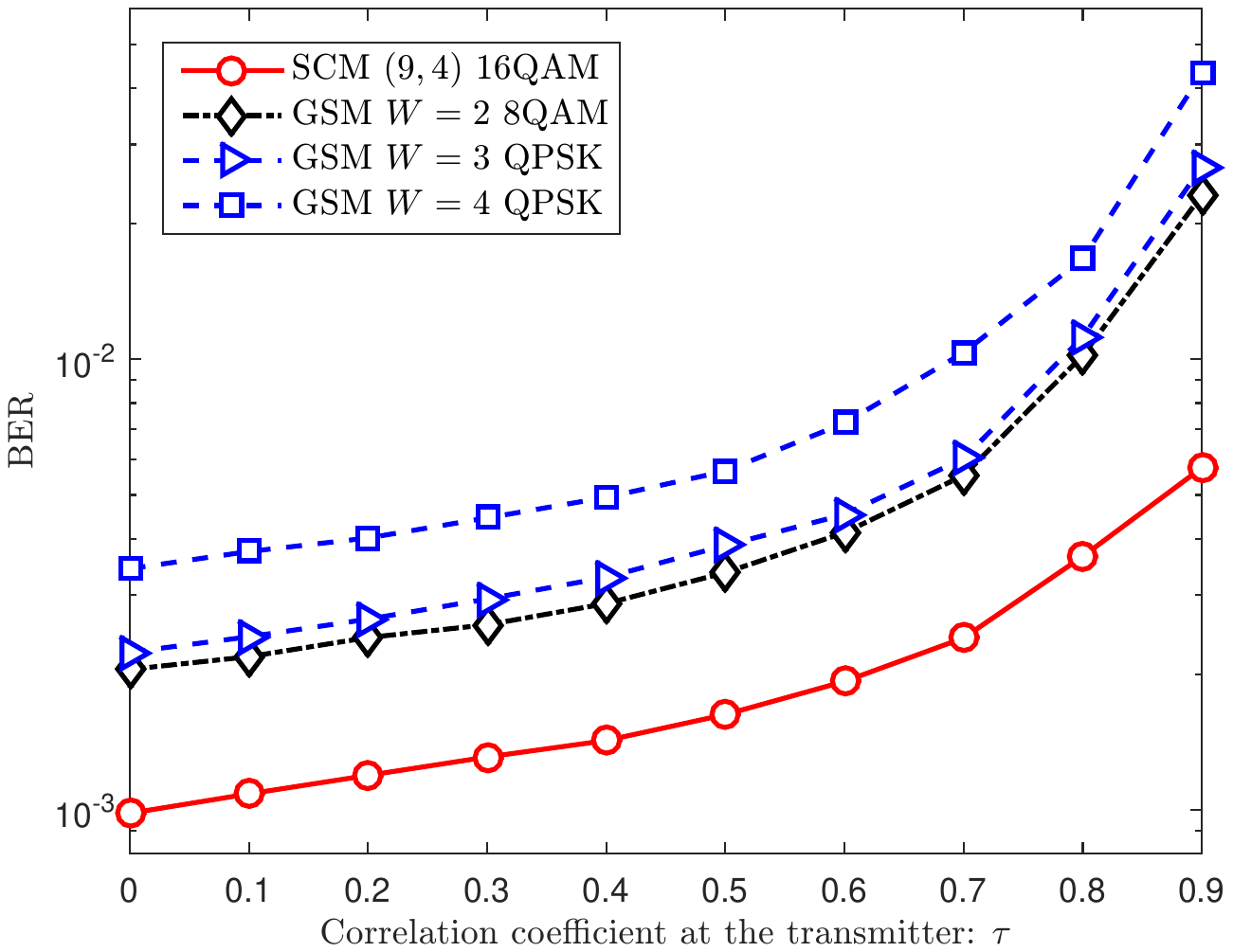}\\
  \caption{The BER performance of the SCM $(9,4)$ and the GSM in the spatially correlated Rician channel, where the SNR is $30$ dB; the numbers of the transmit antennas and the receive antennas are $M = 9$ and $N = 2$, respectively; the Rician $K$ factor is $6$. The system targets a spectral efficiency of $8$ bits/sec/Hz and the perfect CSI is assumed at the receiver.}
  \label{Compr_BER_versus_rhot}
  \end{minipage}%
  \hfill
\begin{minipage}[t]{0.49\linewidth}
  \centering
  \includegraphics[width=3in]{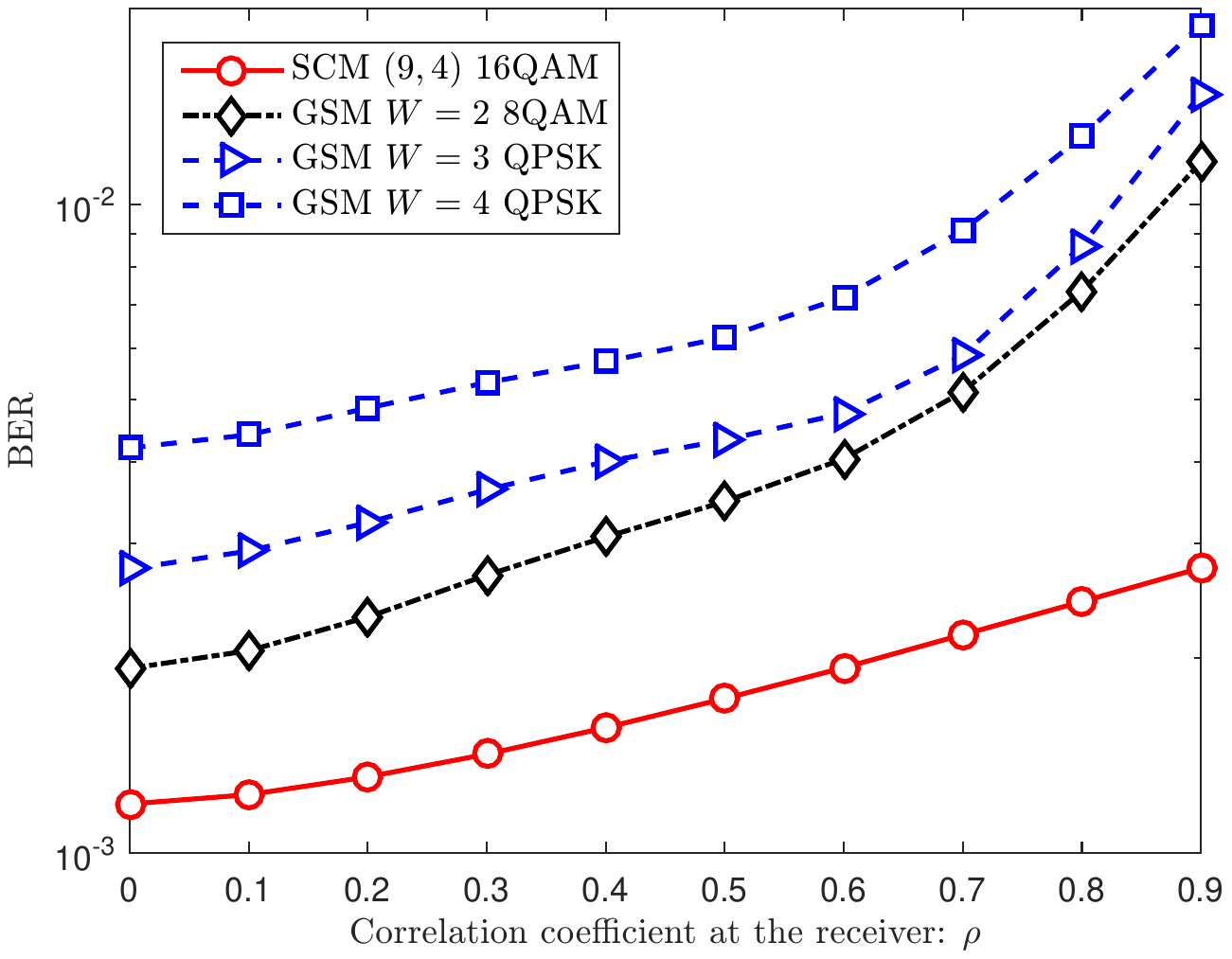}\\
  \caption{The BER performance of the SCM $(9,4)$ and the GSM in the spatially correlated Rician channel, where the SNR is $30$ dB; the numbers of the transmit antennas and the receive antennas are $M = 9$ and $N = 2$, respectively; the Rician $K$ factor is $5$. The system targets a spectral efficiency of $8$ bits/sec/Hz and the perfect CSI is assumed at the receiver.}
  \label{Compr_BER_versus_rhor}
  \end{minipage}
\end{figure*}

%\begin{figure}[!t]
%  \centering
%  \includegraphics[width=3in]{./figs/Compr_BER_versus_rhot}\\
%  \caption{The BER performance of the SCM $(9,4)$ and the GSM in the spatially correlated Rician channel, where the SNR is $30$ dB; the numbers of the transmit antennas and the receive antennas are $M = 9$ and $N = 2$, respectively; the Rician $K$ factor is $6$. The system targets a spectral efficiency of $8$ bits/sec/Hz and the perfect CSI is assumed at the receiver.}
%  \label{Compr_BER_versus_rhot}
%\end{figure}

%\begin{figure}[!t]
%  \centering
%  \includegraphics[width=3in]{./figs/Compr_BER_versus_rhor}\\
%  \caption{The BER performance of the SCM $(9,4)$ and the GSM in the spatially correlated Rician channel, where the SNR is $30$ dB; the numbers of the transmit antennas and the receive antennas are $M = 9$ and $N = 2$, respectively; the Rician $K$ factor is $5$. The system targets a spectral efficiency of $8$ bits/sec/Hz and the perfect CSI is assumed at the receiver.}
%  \label{Compr_BER_versus_rhor}
%\end{figure}

\noindent\textit{\textbf{Observation 3}: The performance gain of the proposed SCM over the conventional GSM increases as the channel correlation becomes more severe (cf. Figs. \ref{Compr_BER_versus_K}-\ref{Compr_BER_versus_rhor}).}

The reliability improvement is evaluated in various channel scenarios, i.e., the Rician $K$ factor, the Nakagami $m$ factor, the spatial correlation coefficients $\rho$ and $\tau$. In Figs. $\ref{Compr_BER_versus_K}$, $\ref{Compr_BER_versus_m}$, $\ref{Compr_BER_versus_rhot}$ and $\ref{Compr_BER_versus_rhor}$, the BER performance of the proposed SCM $(9,4)$ is compared with the conventional GSM schemes with $W = 2, 3, 4$. All the transmission schemes have $9$ transmit antennas and $2$ receive antennas and target at a spectral efficiency of $8$ bits/sec/Hz. The numerical results show that the proposed SCM scheme outweighs the conventional GSM schemes in various channel scenarios and the performance gain increases as the channel correlation becomes stronger. Specifically, the proposed SCM scheme has lower BER than the conventional GSM schemes when the Rician $K$ factor ranges from $0$ to $10$, and the Nakagami $m$ factor ranges from $2$ to $8$. It is worth noting that the Rician fading reduces to the Rayleigh fading with $K = 0$, which means that the proposed SCM is also able to provide slight performance gains in the Rayleigh fading channel. In addition, from Figs. $\ref{Compr_BER_versus_rhot}$ and $\ref{Compr_BER_versus_rhor}$, we observe that the proposed SCM achieves better reliability than the conventional GSM schemes when the channels are spatially correlated. Moreover, the performance gap between the SCM and the GSM becomes larger as the channels are more scarcely scattered. This is due to that, in these channel scenarios, the identification of the active antenna activation patterns are even more unreliable due to similar channel states from transmit antenna groups to the receive antennas. The proposed SCM directly encodes over the transmit antennas, which improves the accuracy of the active antenna detection and yields better reliability.

\noindent\textit{\textbf{Observation 4}: The determined capacity lower bound and the BER upper bound match well with the numerical results at high SNR. In addition, the BER upper bound can accurately characterize the BER floor in the presence of channel uncertainties (cf. Figs. \ref{Compr_DCMC_versus_SNR}, \ref{Compr_DCMC_versus_SNR_Nakagami}, \ref{Compr_SimBER_versus_AnaBER}-\ref{Compr_SimBER_versus_AnaBER_CSE_over_Nakagami}).}

%\begin{figure}[!t]
%  \centering
%  \includegraphics[width=3in]{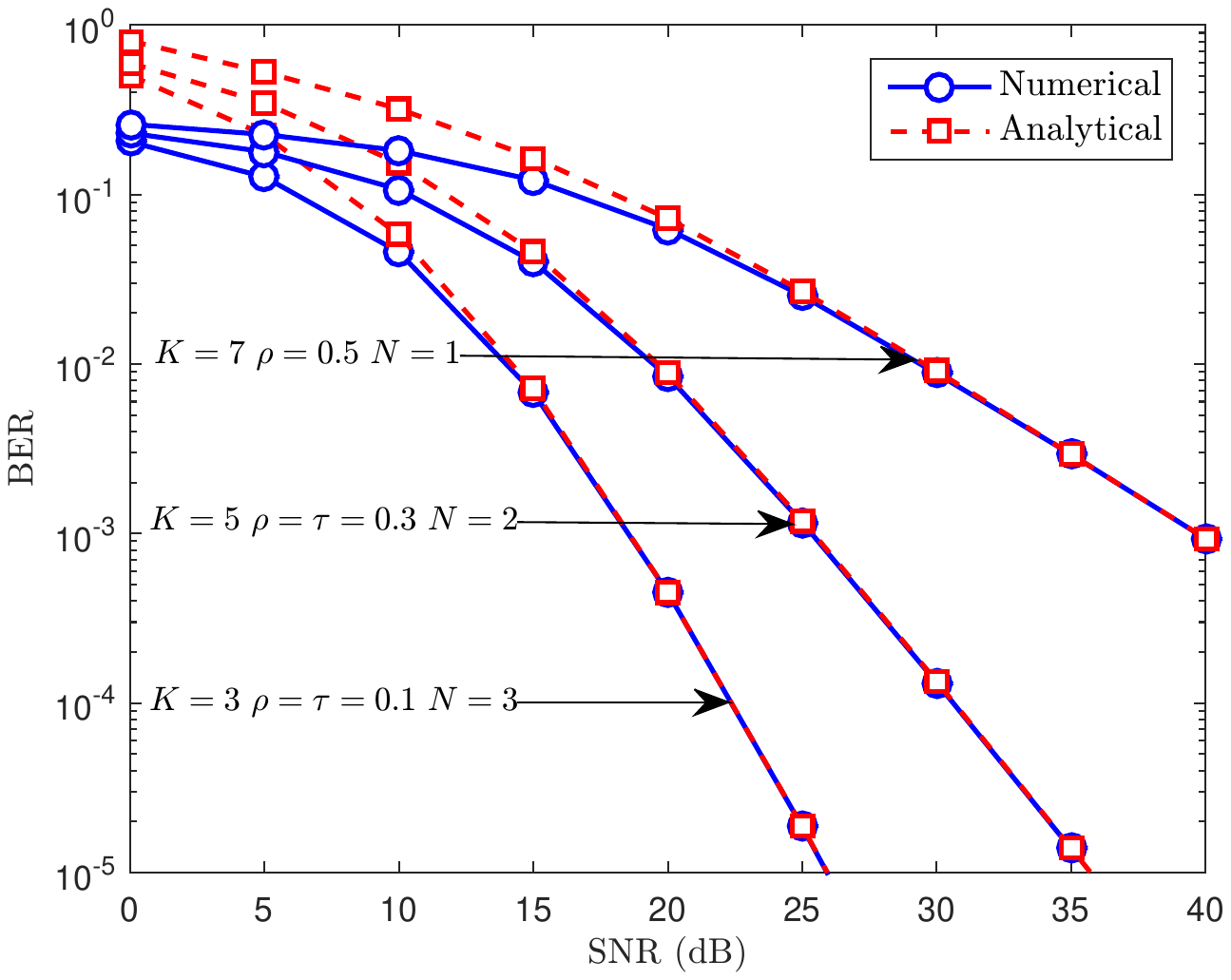}\\
%  \caption{The BER performance of the SCM $(7,3)$ in the spatially correlated Rician channel, where the modulation format is QPSK. Numerical results reveal that the derived BER upper bound is tight at high SNR especially when SNR $> 20$ dB.}
%  \label{Compr_SimBER_versus_AnaBER}
%\end{figure}

%\begin{figure}[!t]
%  \centering
%  \includegraphics[width=3in]{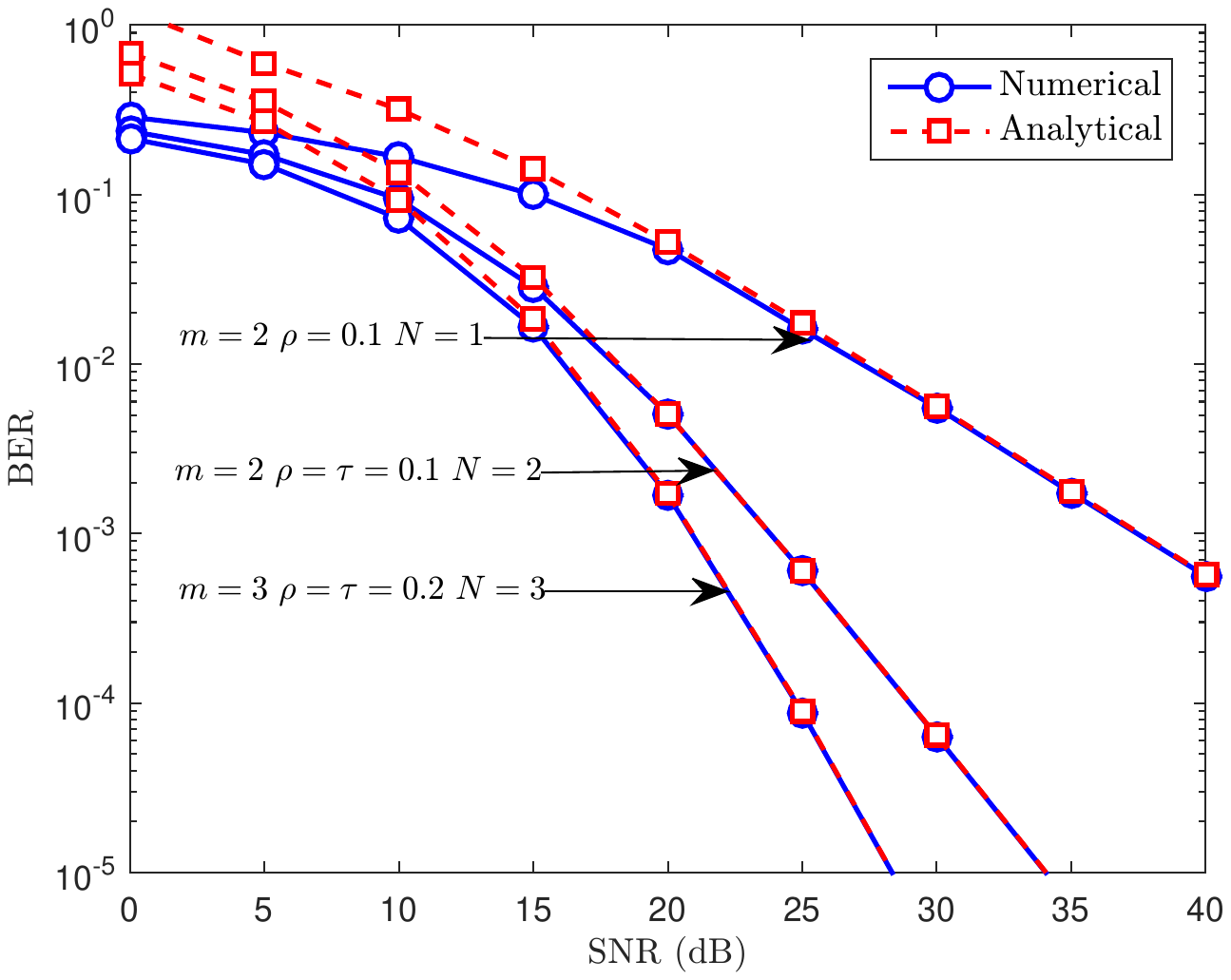}\\
%  \caption{The BER performance of the SCM $(7,3)$ in the spatially correlated Nakagami-$m$ channel, where the modulation format is QPSK. Numerical results reveal that the derived BER upper bound is tight at high SNR especially when SNR $> 20$ dB.}
%  \label{Compr_SimBER_versus_AnaBER_over_Nakagami}
%\end{figure}

\begin{figure*}[!t]
\begin{minipage}[t]{0.49\linewidth}
  \centering
  \includegraphics[width=3in]{./figs/Compr_SimBER_versus_AnaBER}\\
  \caption{The BER performance of the SCM $(7,3)$ in the spatially correlated Rician channel, where the modulation format is QPSK. The results show that the derived BER upper bound is tight at high SNR especially when SNR $> 20$ dB.}
  \label{Compr_SimBER_versus_AnaBER}
  \end{minipage}%
  \hfill
\begin{minipage}[t]{0.49\linewidth}
  \centering
  \includegraphics[width=3in]{./figs/Compr_SimBER_versus_AnaBER_over_Nakagami}\\
  \caption{The BER performance of the SCM $(7,3)$ in the spatially correlated Nakagami-$m$ channel, where the modulation format is QPSK. The results show that the derived BER upper bound is tight at high SNR especially when SNR $> 20$ dB.}
  \label{Compr_SimBER_versus_AnaBER_over_Nakagami}
  \end{minipage}
\end{figure*}

From Figs. $\ref{Compr_DCMC_versus_SNR}$ and $\ref{Compr_DCMC_versus_SNR_Nakagami}$, we observe that the derived capacity lower bound is tight when SNR is larger than $25$ dB. In addition, in Figs. $\ref{Compr_SimBER_versus_AnaBER}$ and $\ref{Compr_SimBER_versus_AnaBER_over_Nakagami}$, it depicts both the numerical BER performance and the BER upper bound for the SCM $(7,3)$ with the QPSK modulation. The perfect CSI is assumed at the receiver. In Fig. $\ref{Compr_SimBER_versus_AnaBER}$, the BER performance is evaluated in the Rician fading and the Nakagami-$m$ fading is employed in Fig. $\ref{Compr_SimBER_versus_AnaBER_over_Nakagami}$. The accuracy of the analytical BER upper bound is validated with various channel parameters, i.e., the Rician $K$ factor, the Nakagami $m$ factor, the spatial correlation coefficients $\rho$ and $\tau$. The comparisons reveal that the proposed BER upper bound is tight at high SNR especially when SNR $> 20$ dB, which further substantiates the theoretical derivations in Section \ref{Performance_Analysis}.

%\begin{figure}[!t]
%  \centering
%  \includegraphics[width=3in]{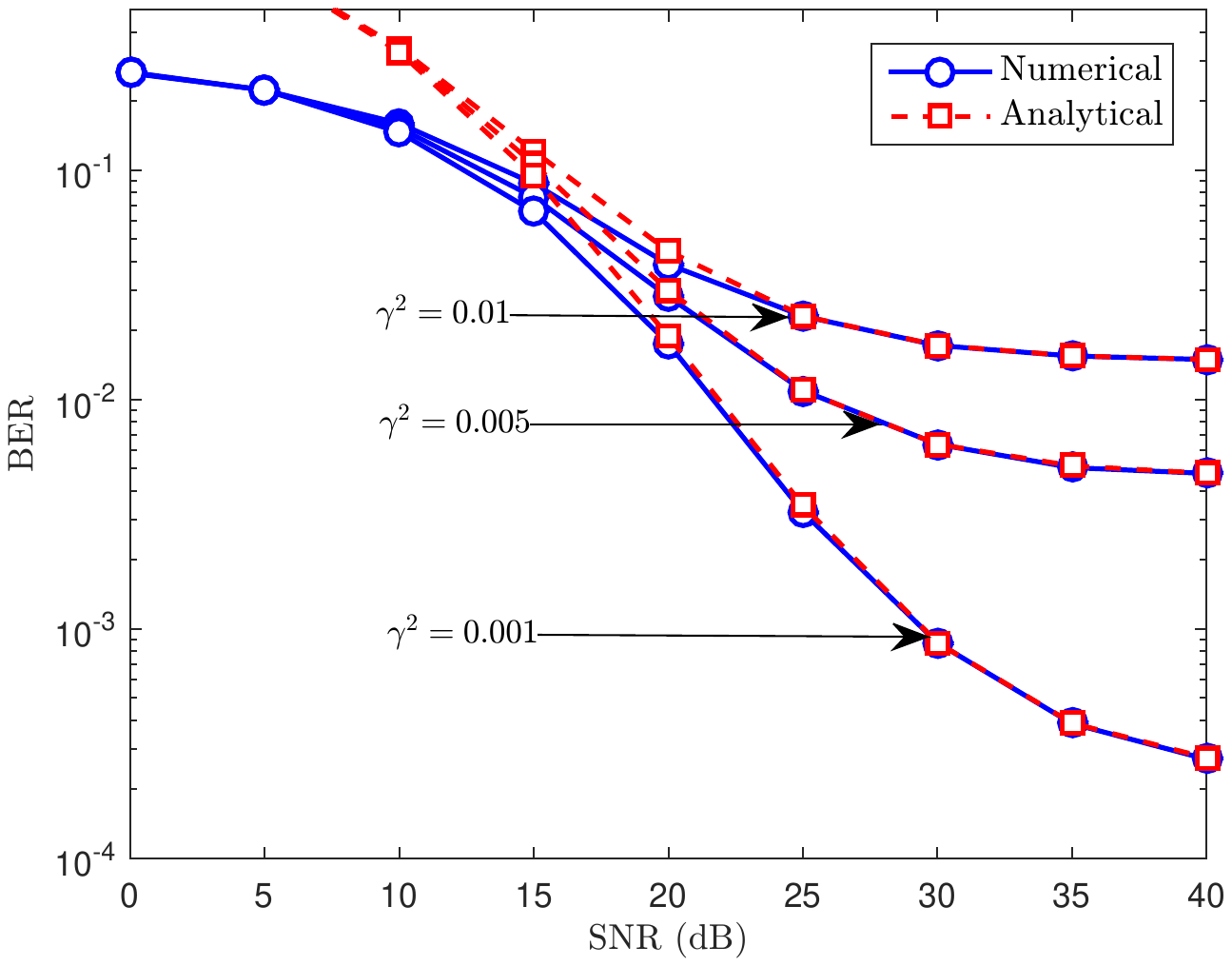}\\
%  \caption{The BER performance of the SCM $(7,4)$ in the spatially correlated Rician channel, where the modulation format is QPSK. The Rician $K$ factor is $5$; the numbers of the transmit antennas and the receive antennas are $M = 7$ and $N = 2$, respectively; the spatial correlation coefficients at the transmitter and the receiver are $\rho = \tau = 0.3$. The proposed BER upper bound can accurately characterize the BER floor.}
%  \label{Compr_SimBER_versus_AnaBER_CSE}
%\end{figure}

%\begin{figure}[!t]
%  \centering
%  \includegraphics[width=3in]{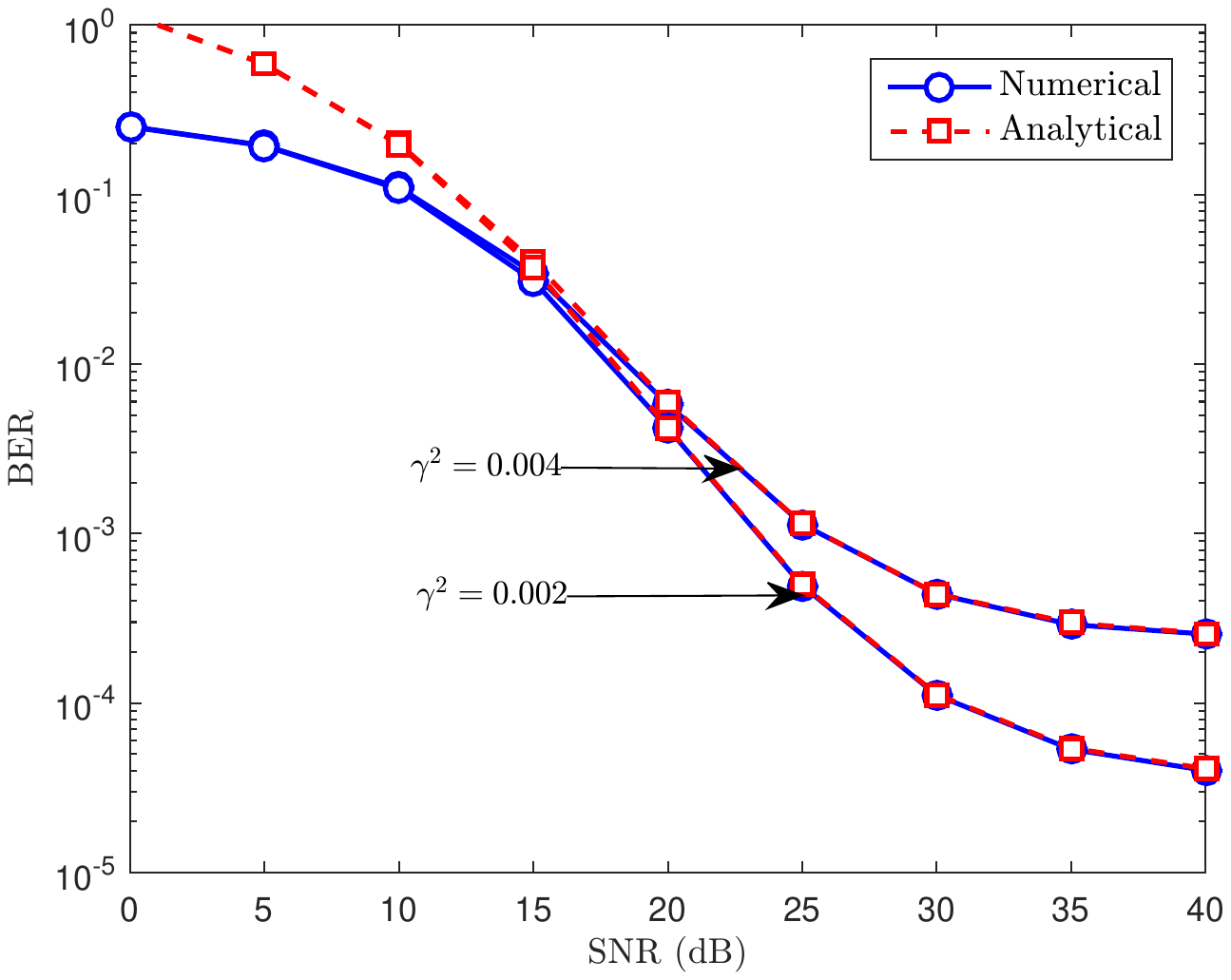}\\
%  \caption{The BER performance of the SCM $(7,4)$ in the spatially correlated Nakagami-$m$ channel, where the modulation format is QPSK. The Nakagami $m$ factor is $3$; the numbers of the transmit antennas and the receive antennas are $M = 7$ and $N = 3$, respectively; the spatial correlation coefficients at the transmitter and the receiver are $\rho = \tau = 0.2$. The proposed BER upper bound can accurately characterize the BER floor.}
%  \label{Compr_SimBER_versus_AnaBER_CSE_over_Nakagami}
%\end{figure}

\begin{figure*}[!t]
\begin{minipage}[t]{0.49\linewidth}
  \centering
  \includegraphics[width=3in]{./figs/Compr_SimBER_versus_AnaBER_CSE}\\
  \caption{The BER performance of the SCM $(7,4)$ in the spatially correlated Rician channel, where the modulation format is QPSK. The Rician $K$ factor is $5$; the numbers of the transmit antennas and the receive antennas are $M = 7$ and $N = 2$, respectively; the spatial correlation coefficients at the transmitter and the receiver are $\rho = \tau = 0.3$. The proposed BER upper bound can accurately characterize the BER floor.}
  \label{Compr_SimBER_versus_AnaBER_CSE}
  \end{minipage}%
  \hfill
\begin{minipage}[t]{0.49\linewidth}
  \centering
  \includegraphics[width=3in]{./figs/Compr_SimBER_versus_AnaBER_CSE_over_Nakagami}\\
  \caption{The BER performance of the SCM $(7,4)$ in the spatially correlated Nakagami-$m$ channel, where the modulation format is QPSK. The Nakagami $m$ factor is $3$; the numbers of the transmit antennas and the receive antennas are $M = 7$ and $N = 3$, respectively; the spatial correlation coefficients at the transmitter and the receiver are $\rho = \tau = 0.2$. The proposed BER upper bound can accurately characterize the BER floor.}
  \label{Compr_SimBER_versus_AnaBER_CSE_over_Nakagami}
  \end{minipage}
\end{figure*}

In Figs. $\ref{Compr_SimBER_versus_AnaBER_CSE}$ and $\ref{Compr_SimBER_versus_AnaBER_CSE_over_Nakagami}$, we compare the numerical BER performance with the BER upper bound with different channel uncertainties, i.e., $\gamma^2 = 0.01, 0.005, 0.004, 0.002, 0.001$. In Fig. $\ref{Compr_SimBER_versus_AnaBER_CSE}$, the BER performance are evaluated in the Rician fading and the Nakagami-$m$ fading is employed in Fig. $\ref{Compr_SimBER_versus_AnaBER_CSE_over_Nakagami}$. From Figs. $\ref{Compr_SimBER_versus_AnaBER_CSE}$ and $\ref{Compr_SimBER_versus_AnaBER_CSE_over_Nakagami}$, we observe that: First, the BER upper bound with imperfect CSI is tight when SNR is larger than $20$ dB; Second, the BER floor levels in different channel uncertainties can be accurately characterized by the analytical BER upper bound; Third, the error floor of the proposed SCM rises up as the channel estimation error increases. Specifically, the SCM $(7,4)$ has an error floor of $0.005$ when the variance of the channel estimation error is $0.005$, while it floors at $0.015$ when the variance of the channel estimation error is $0.01$.

\noindent\textit{\textbf{Observation 5}: The suboptimal detector strikes a flexible balance between reliability and computational complexity and achieves near-optimal performance by having $3$ antenna activation pattern candidates (cf. Fig. \ref{Compr_SVD_versus_ML}).}

\begin{figure}[!t]
  \centering
  \includegraphics[width=3in]{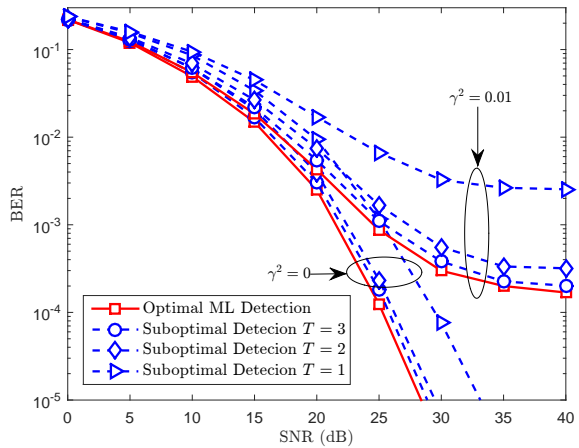}\\
  \caption{The BER performance of the the proposed suboptimal detector and the optimal ML detector in the Rician channel. The modulation format of the SCM $(7,3)$ is $16$ QAM; the numbers of the transmit antennas and the receive antennas are $M = 7$ and $N = 4$, respectively; the Rician $K$ factor is $5$. $T$ is the cardinality of the candidate set $\mathcal{T}$.}
  \label{Compr_SVD_versus_ML}
\end{figure}

In Fig. $\ref{Compr_SVD_versus_ML}$, it depicts the BER performance of the optimal and the suboptimal detectors for the SCM $(7,3)$ in the Rician channel. Both the perfect and the imperfect channel states are assumed at the receiver. The channel uncertainty is characterized by $\gamma^2 = 0.01$ and let $T$ be the cardinality of the candidate set $\mathcal{T}$. From Fig. $\ref{Compr_SVD_versus_ML}$, we observe that: First, the BER performance of the suboptimal detector approaches that of the ML detection as the value of the set cardinality $T$ increasing. This is because the antenna group indices are more likely to be involved in the candidate set for a larger value of $T$; Second, the proposed suboptimal detector with $T = 3$ is sufficiently good to achieve a near-optimal error performance under both perfect and imperfect channel estimation. Thus, the proposed detector achieves a satisfactory tradeoff between complexity and reliability.

\section{Conclusions}
We proposed a SCM scheme to enhance both the reliability and the efficiency of a multi-antenna communication system. By encoding information in the spatial domain, the detection probability of the antenna activation patterns can be directly enhanced, thus the system reliability is significantly improved. The proposed SCM also benefits from the high capacity, the flexible tradeoff between efficiency and reliability, and the compatibility. We show that the proposed SCM outperforms the conventional GSM in various channel scenarios. Furthermore, both the optimal and the suboptimal detectors are formulated for the SCM and the suboptimal alternative strikes a flexible balance between the computational complexity and the reliability. The ramification of this paper is that the spatial coding is a new perspective to improve the performance of the multi-antenna communication systems, especially in the context of large-scale antenna arrays.

\end{document}